\newcolumntype{L}[1]{>{\raggedright\let\newline\\\arraybackslash\hspace{0pt}}m{#1}}
\newcolumntype{C}[1]{>{\centering\let\newline\\\arraybackslash\hspace{0pt}}m{#1}}
\newcolumntype{R}[1]{>{\raggedleft\let\newline\\\arraybackslash\hspace{0pt}}m{#1}}
\let\MYcaption\@makecaption
\let\@makecaption\MYcaption
\newacronym{wrt}{w.r.t.}{with respect to}
\newacronym{RHS}{R.H.S.}{right-hand side}
\newacronym{LHS}{L.H.S.}{left-hand side}
\newacronym{iid}{i.i.d.}{independent and identically distributed}
\let\saved@bibitem\@bibitem\makeatother
\let\@bibitem\saved@bibitem\makeatother
\crefname{equation}{}{}
\Crefname{equation}{}{}
\crefname{claim}{claim}{claims}
\crefname{step}{step}{steps}
\crefname{line}{line}{lines}
\crefname{condition}{condition}{conditions}
\crefname{dmath}{}{}
\crefname{dseries}{}{}
\crefname{dgroup}{}{}
\crefname{Theorem}{Theorem}{Theorems}
\crefname{Corollary}{Corollary}{Corollaries}
\crefname{Proposition}{Proposition}{Propositions}
\crefname{Lemma}{Lemma}{Lemmas}
\crefname{Definition}{Definition}{Definitions}
\crefname{Example}{Example}{Examples}
\crefname{Assumption}{Assumption}{Assumptions}
\crefname{Remark}{Remark}{Remarks}
\crefname{Rem}{Remark}{Remarks}
\crefname{remarks}{Remarks}{Remarks}
\crefname{Exercise}{Exercise}{Exercises}
\crefname{Theorem_A}{Theorem}{Theorems}
\crefname{Corollary_A}{Corollary}{Corollaries}
\crefname{Proposition_A}{Proposition}{Propositions}
\crefname{Lemma_A}{Lemma}{Lemmas}
\crefname{Definition_A}{Definition}{Definitions}
\newtheorem{Theorem}{Theorem}
\newtheorem{Corollary}{Corollary}
\newtheorem{Proposition}{Proposition}
\newtheorem{Lemma}{Lemma}
\newtheorem{Proposition}[theorem]{Proposition}
\newtheorem{Example}{Example}
\theoremstyle{remark}
\theoremstyle{plain}
\newcommand{\Real}{\mathbb{R}}
\newcommand{\calF}{\mathcal{F}}
\newcommand{\calI}{\mathcal{I}}
\newcommand{\calJ}{\mathcal{J}}
\newcommand{\calP}{\mathcal{P}}
\newcommand{\calQ}{\mathcal{Q}}
\newcommand{\ba}{\mathbf{a}}
\newcommand{\bA}{\mathbf{A}}
\newcommand{\bb}{\mathbf{b}}
\newcommand{\bB}{\mathbf{B}}
\newcommand{\bC}{\mathbf{C}}
\newcommand{\bD}{\mathbf{D}}
\newcommand{\bF}{\mathbf{F}}
\newcommand{\bG}{\mathbf{G}}
\newcommand{\bH}{\mathbf{H}}
\newcommand{\bI}{\mathbf{I}}
\newcommand{\bn}{\mathbf{n}}
\newcommand{\bp}{\mathbf{p}}
\newcommand{\bP}{\mathbf{P}}
\newcommand{\bQ}{\mathbf{Q}}
\newcommand{\bR}{\mathbf{R}}
\newcommand{\bT}{\mathbf{T}}
\newcommand{\bU}{\mathbf{U}}
\newcommand{\bv}{\mathbf{v}}
\newcommand{\bV}{\mathbf{V}}
\newcommand{\bx}{\mathbf{x}}
\newcommand{\by}{\mathbf{y}}
\newcommand{\bz}{\mathbf{z}}
\DeclareSymbolFont{bsfletters}{OT1}{cmss}{bx}{n}
\DeclareSymbolFont{ssfletters}{OT1}{cmss}{m}{n}
\DeclareMathSymbol{\bsfGamma}{0}{bsfletters}{'000}
\DeclareMathSymbol{\ssfGamma}{0}{ssfletters}{'000}
\DeclareMathSymbol{\bsfDelta}{0}{bsfletters}{'001}
\DeclareMathSymbol{\ssfDelta}{0}{ssfletters}{'001}
\DeclareMathSymbol{\bsfTheta}{0}{bsfletters}{'002}
\DeclareMathSymbol{\ssfTheta}{0}{ssfletters}{'002}
\DeclareMathSymbol{\bsfLambda}{0}{bsfletters}{'003}
\DeclareMathSymbol{\ssfLambda}{0}{ssfletters}{'003}
\DeclareMathSymbol{\bsfXi}{0}{bsfletters}{'004}
\DeclareMathSymbol{\ssfXi}{0}{ssfletters}{'004}
\DeclareMathSymbol{\bsfPi}{0}{bsfletters}{'005}
\DeclareMathSymbol{\ssfPi}{0}{ssfletters}{'005}
\DeclareMathSymbol{\bsfSigma}{0}{bsfletters}{'006}
\DeclareMathSymbol{\ssfSigma}{0}{ssfletters}{'006}
\DeclareMathSymbol{\bsfUpsilon}{0}{bsfletters}{'007}
\DeclareMathSymbol{\ssfUpsilon}{0}{ssfletters}{'007}
\DeclareMathSymbol{\bsfPhi}{0}{bsfletters}{'010}
\DeclareMathSymbol{\ssfPhi}{0}{ssfletters}{'010}
\DeclareMathSymbol{\bsfPsi}{0}{bsfletters}{'011}
\DeclareMathSymbol{\ssfPsi}{0}{ssfletters}{'011}
\DeclareMathSymbol{\bsfOmega}{0}{bsfletters}{'012}
\DeclareMathSymbol{\ssfOmega}{0}{ssfletters}{'012}
\newcommand{\bdelta}{\bm{\delta}}
\newcommand{\bGamma}{\bm{\Gamma}}
\newcommand{\bSigma	}{\bm{\Sigma}}
\newcommand{\bPsi}{\bm{\Psi}}
\newcommand{\bXi}{\bm{\Xi}}
\newcommand{\bOmega}{\bm{\Omega}}
\newcommand{\bPhi}{\bm{\Phi}}
\newcommand{\bPi}{\bm{\Pi}}
\newcommand{\bTheta}{\bm{\Theta}}
\DeclareMathOperator{\Tr}{Tr}
\DeclareMathOperator{\eig}{eig}
\newcommand{\qednew}{\nobreak \ifvmode \relax \else
      \ifdim\lastskip<1.5em \hskip-\lastskip
      \hskip1.5em plus0em minus0.5em \fi \nobreak
      \vrule height0.75em width0.5em depth0.25em\fi}
\newcommand{\nn}{\nonumber\\}
\newcommand{\T}{^{\intercal}}% transpose notation
\newcommand{\tc}[1]{^{(#1)}}
\newcommand{\trace}[1]{{\Tr\left( #1 \right)}}
\newcommand{\cond}[2]{\left. {#1}\, \middle| \, {#2} \right.}
\DeclareDocumentCommand \P { g d() g } {%
	\IfNoValueTF {#3} 
	{%
		\IfNoValueTF {#1} 
		{%
			\IfNoValueTF {#2}
			{%
				\mathbb{P}%
			}%
			{%
				\mathbb{P}\left({#2}\right)%
			}%
		}%
		{%
			\IfNoValueTF {#2}
			{%
				\mathbb{P}_{#1}%
			}%
			{%
				\mathbb{P}_{#1}\left({#2}\right)%
			}%		
		}%
	}%
	{%
		\IfNoValueTF {#1} 
		{%
			\mathbb{P}\left(\cond{#2}{#3}\right)%
		}%
		{%
			\mathbb{P}_{#1}\left(\cond{#2}{#3}\right)%
		}%	
	}%
}
\DeclareDocumentCommand \E { g o g } {%
	\IfNoValueTF {#3} 
	{%
		\IfNoValueTF {#1} 
		{%
			\IfNoValueTF {#2}
			{%
				\mathbb{E}%
			}%
			{%
				\mathbb{E}\left[{#2}\right]%
			}%
		}%
		{%
			\IfNoValueTF {#2}
			{%
				\mathbb{E}_{#1}%
			}%
			{%
				\mathbb{E}_{#1}\left[{#2}\right]%
			}%		
		}%
	}%
	{%
		\IfNoValueTF {#1} 
		{%
			\mathbb{E}\left[\cond{#2}{#3}\right]%
		}%
		{%
			\mathbb{E}_{#1}\left[\cond{#2}{#3}\right]%
		}%	
	}%
}
\definecolor{gray90}{gray}{0.9}
	\newcommand{\msout}[1]{\text{\color{green} \sout{\ensuremath{#1}}}}
	\newcommand{\del}[1]{{\color{green}\ifmmode \msout{#1}\else\sout{#1}\fi}}
	\newcommand{\msout}[1]{#1}
	\newcommand{\del}[1]{#1}
\newcommand{\hide}[1]{}
	 \def\@testdef #1#2#3{%
		 \def\reserved@a{#3}\expandafter \ifx \csname #1@#2\endcsname
		\reserved@a  \else
	 \typeout{^^Jlabel #2 changed:^^J%
	 \meaning\reserved@a^^J%
	 \expandafter\meaning\csname #1@#2\endcsname^^J}%
	 \@tempswatrue \fi}
\newcommand{\vq}{{\bf 1}_{|{\cal Q}|}}
\newcommand{\tbz}{\tilde{\bz}}
\newcommand{\tbP}{\tilde{\bP}}
\newcommand{\pC}{\left[{\bf C}_k\right]}
\newcommand{\pG}{\left[{\bf G}_{k+n|k}\right]}
\newcommand{\pT}{\left[\bT_k\right]}
\newcommand{\s}{{\setminus s}}
\newcommand{\JsIs}{{\calJ_s,\calI_s}}
\newcommand{\JlIl}{{\calJ_1,\calI_1}}
\newcommand{\JSIS}{{\calJ_S,\calI_S}}
\newcommand{\JnsIns}{{\calJ_\s,\calI_\s}}
\newcommand{\JnsJns}{{\calJ_\s,\calJ_\s}}
\newcommand{\IlIS}{{\calI_1,\calI_S}}
\newcommand{\ISIl}{{\calI_S,\calI_1}}
\newcommand{\IlIl}{{\calI_1,\calI_1}}
\newcommand{\IsIs}{{\calI_s,\calI_s}}
\newcommand{\IsJs}{{\calI_s,\calJ_s}}
\newcommand{\IsJns}{{\calI_s,\calJ_\s}}
\newcommand{\IsIns}{{\calI_s,\calI_\s}}
\newcommand{\InsIns}{{\calI_\s,\calI_\s}}
\newcommand{\submat}[2]{\left[#1\right]_{#2}}
\begin{document}

\title{Compressive Privacy for a \\Linear Dynamical System}
\author{Yang~Song, Chong~Xiao~Wang, and Wee~Peng~Tay,~\IEEEmembership{Senior~Member,~IEEE}%
\thanks{
This work was supported by the ST Engineering NTU Corporate Lab through the NRF corporate lab@university scheme Project Reference C-RP10B, and the Singapore Ministry of Education Academic Research Fund Tier 2 grant MOE2018-T2-2-019. A preliminary version of this paper was presented at the IEEE Int. Conf. Acoustics, Speech, and Signal Processing, Calgary, Canada, 2018.  The authors are with the School of Electrical and Electronic Engineering, Nanyang Technological University, Singapore. E-mails: \texttt{{songy, wangcx, wptay}@ntu.edu.sg}
}%
}

%\markboth{IEEE TRANSACTIONS ON INFORMATION THEORY,~Vol.~, No.~, ~2012}%
%{Tay: xxxx}

% The paper headers
%\markboth
%   {To be submitted... }
%   {Tay \MakeLowercase{\textit{et al.}}: }

% make the title area
% Don't write page number 0 to the cover page.
\maketitle \thispagestyle{empty}

% Put abstract and the paper's body in a new page, page 1.
%\newpage
%\setcounter{page}{1}

%---------------------------------------------------------------------------%
%                           abstract and key words                          %
%---------------------------------------------------------------------------%
\begin{abstract}
We consider a linear dynamical system in which the state vector consists of both public and private states. One or more sensors make measurements of the state vector and sends information to a fusion center, which performs the final state estimation. To achieve an optimal tradeoff between the utility of estimating the public states and protection of the private states, the measurements at each time step are linearly compressed into a lower dimensional space. Under the centralized setting where all measurements are collected by a single sensor, we propose an optimization problem and an algorithm to find the best compression matrix. Under the decentralized setting where measurements are made separately at multiple sensors, each sensor optimizes its own local compression matrix. We propose methods to separate the overall optimization problem into multiple sub-problems that can be solved locally at each sensor. We consider the cases where there is no message exchange between the sensors; and where each sensor takes turns to transmit messages to the other sensors. Simulations and empirical experiments demonstrate the efficiency of our proposed approach in allowing the fusion center to estimate the public states with good accuracy while preventing it from estimating the private states accurately.    
\end{abstract}

\begin{IEEEkeywords}
Inference privacy, compressive privacy, parameter privacy, linear dynamical system, Kalman filter.
\end{IEEEkeywords}

\section{Introduction}
\label{sect:intro}

In the emerging Internet of Things (IoT) paradigm, large numbers of sensors are deployed in modern infrastructures, such as smart grids, population health monitoring, traffic monitoring, online recommendation systems, etc.\ \cite{SunSGC2017,ChenMBS2006,TaySPT0608,AlemdarCN2010,YuIET0714,VeugenICASSP15}. These sensors make observations and send data like the level of power consumption, sickness symptoms, or GPS coordinates in real-time to a fusion center \cite{TayITT0908,ChenSPT0312,LengJSP0115,TayJSTSP0315,HoSPT1015}, thus allowing utility companies and other service providers to improve their service offerings. However, some privacy-sensitive information such as personal activities, behaviors, preferences, habits and health conditions can be inferred from the raw measurements collected from individuals. For example, in a smart grid infrastructure, customers could be offered better rates if they continuously send their instantaneous power consumption to the utility company, which helps to improve the demand forecasting mechanism. However, from the fine-grained smart meter readings, the customers' private activities such as when and for how long appliances are used can be inferred \cite{HartProcIEEE1992,SankarSGT0613,HongIFST0917}. Another example is the inference of  individuals' private ratings or preferences from temporal changes in public recommendation systems \cite{NarISSP08,CalISSP11}. In traffic monitoring systems, individual users send anonymized personal location traces, which can be GPS coordinates measured by their smartphones, to a data aggregator to aid in traffic state estimation. However, an adversary may link an anonymous GPS trace to a particular person provided the knowledge of the person's residence and/or working location \cite{HohMCT0512,ZhangICMCN11,GisITST0615,AndreICASSP2017}. Therefore, it is essential to develop privacy preserving mechanisms to protect private state information from being inferred while retaining the quality of services that depend on the sensor measurements. All the above mentioned applications involve dynamic and time-varying data streams instead of static data sets.

%The design of privacy preserving mechanisms for data analysis in static databases has been widely studied in previous work [5, 4, 6, 7]. Much
%less attention has been devoted to dynamic datasets however
%[8, 9], and in particular to the design of privacy-preserving
%model-based dynamic estimators [10, 11, 12, 13]. Many applications such as health monitoring [14] or fault
%detection [15] require model-based estimators dealing with
%dynamic, time-varying data streams [16, 17].

In this paper, we consider a linear dynamical system (LDS) in which one or more sensors make measurements of the evolving state vector, and send information to a fusion center that makes the final state estimation. Suppose some of the states contain sensitive information. We call these states the private states, while the remaining states are public states. Our goal is to allow the fusion center to estimate the public states while making it difficult for it to estimate the private states. To achieve this, each sensor sanitizes its measurement to remove statistical information about the private states before sending it to the fusion center. A naive approach is for each sensor to first estimate the public states and send only this information to the fusion center. Such an approach does not prevent statistical inference of the private states at a future time step since in a LDS, the public states at one time step may provide statistical information about the private states in a future time step. Furthermore, when there are more than one sensor, it may not be possible for each individual sensor to estimate the public states. There is therefore a need to investigate the optimal way to sanitize information to achieve the best tradeoff between the utility of estimating the public states and the protection of the private states over all time steps. In this paper, we consider the use of a linear transformation or compression to map the sensor measurements into a lower dimensional space as the sanitization procedure. We call this \emph{compressive privacy} \cite{KunSPM2017}. 

\subsection{Related Work}\label{sect:related}

In general, privacy can be categorized into two classes: \emph{data} privacy and \emph{inference} privacy \cite{SunICASSP2016,HeICASSP2017,SunSPAWC2017,MengSun2018}. Data privacy protects the original measurements from being inferred by fusion center. The privacy metrics that have been proposed for preserving data privacy in a sensor network include homomorphic encryption \cite{BonICTC2005,IshICTC2007,CenASTC2009}, $k$-anonymity \cite{MacGehKif:C06}, plausible deniability \cite{BinShoGun:J17} and local differential privacy \cite{WanSIGKDD2003,SarACCCC2014,XioICASSP2016,Liao2017m,DucJorWai:C13}. Inference privacy on the other hand, prevents the fusion center from making certain statistical inferences. The privacy metrics that have been proposed to achieve inference privacy include information privacy \cite{CalAllerton2012,AsoISIT2016,HeICASSP2017,MengSun2018}, differential privacy \cite{dwork2014,CufYu:C16,GhoKle:arXiv16}, Blowfish privacy \cite{HeMacDin:C14}, mutual information privacy \cite{WangTIT2016,duchi2014} and average information leakage \cite{SalGCSIP2013,YamTIT1183}. The relationship between data privacy and inference privacy has been studied in \cite{WangTIT2016,SunSPAWC2017}. 

The type of privacy we consider in this paper can be considered to be a form of inference privacy since we wish to prevent the fusion center from inferring about the private states. The aforementioned inference privacy metrics like information privacy, differential privacy, mutual information privacy and average information leakage either assume a finite state alphabet (and are thus more applicable in a \emph{hypothesis testing} context instead of an estimation context), or do not directly guarantee that the fusion center cannot estimate the private states to within a certain error, making the choice of the privacy budgets in these metrics unclear and unintuitive. Furthermore, although quantities like mutual information have relationships with estimation error, they are not as easy to work with in a LDS where we consider privacy over multiple time steps. Therefore, in this paper, we define privacy on the estimation error variance instead, and require that the estimation error variances of the private states to be above a predefined threshold.

The papers \cite{GloACGKB2003,CheANIPS2002,CheNIPS2003,EmadITW2013} proposed to extract the relevant aspects of the data by the information bottleneck (IB) approach. Given the joint distribution of a ``source" variable $\bz$ and another ``relevance" variable ${\bf y}$, IB compresses $\bz$ to obtain $\tbz=\bC\bz$, while preserving information about ${\bf y}$. IB tradeoffs the complexity of the representation of $\bz$, measured by the mutual information $I(\bz; \tbz)$ , against the accuracy of this representation measured by $I(\by; \tbz)$. The compression is designed to make $I(\bz; \tbz)$ small and $I(\by;\tbz)$ large. We can interpret $\by$ to be the public state or information while $\bz$ to be the private state.  The privacy funnel (PF) method \cite{MakITW2014} uses a mapping from the source variable $\bz$ to $\tbz=\bC\bz$ so that $I(\by;\tbz)$ is small and $I(\bz; \tbz)$ is large. PF operates in a way that is opposite to IB by treating $\by$ as the private state. Both IB and PF do not consider information over multiple time steps, and are not directly applicable to a LDS without additional prior information about the LDS evolution.

The papers \cite{Diamantaras2016,KunSPM2017,KunJFI2017,Al2017,ChaChaMit:C16, KunChaCha:J17,ChaChaKun:C17} introduced the concept of compressive privacy (CP), which is a dimension-reducing subspace approach. They considered the case where the source variable $\bz$ can be mapped as $\bGamma^{(\calP)}\bz$ to the utility subspace and as $\bGamma^{(\calQ)}\bz$ to the privacy subspace, where $\bGamma^{(\calP)}$ and $\bGamma^{(\calQ)}$ are projection matrices. Under the assumption that $\bz$ has a Gaussian distribution, the compression matrix $\bC$ is designed to achieve an optimal privacy-utility tradeoff based on $I(\bGamma^{(\calP)}\bz; \tbz)$ and $I(\bGamma^{(\calQ)}\bz; \tbz)$, where $\tbz=\bC\bz$. In the case where the projection matrices $\bGamma^{(\calP)}$ and $\bGamma^{(\calQ)}$ are unknown, a machine learning approach is proposed to learn the compression matrix $\bC$ from a set of training data. Our formulation is similar to \cite{KunSPM2017}, except that we assume that the underlying state vector generating the measurement $\bz$ can be divided into public and private states, instead of the measurement $\bz$ being mapped into utility and privacy subspaces. A more detailed technical discussion of the differences is provided in \cref{section:pf}. Furthermore, different from \cite{Diamantaras2016,KunSPM2017,KunJFI2017,Al2017} which considers $\bz$ to be a single-shot observation, we consider a LDS in which observations are temporally correlated, and our goal is to preserve the privacy of the private states over multiple future time steps.  

The references \cite{NyTAC2014,DegueGlobalSIP17} developed differential privacy mechanisms for the measurements $\bz_k$, $k=1,2,\ldots$, in a LDS. The authors of \cite{NyTAC2014} proposed to use an input perturbation mechanism so that $\tbz_k=\bz_k+\bm{\nu}_k$, where $\bm{\nu}_k$ is white Gaussian noise, is used in place of $\bz_k$ at each time step $k$ to guarantee $(\epsilon,\delta)$-differential privacy for the original measurement $\bz_k$. The paper \cite{DegueGlobalSIP17} proposed to apply an input perturbation mechanism together with a linear transformation, i.e.,  $\tbz_k=\bC\bz_k+\bm{\nu}_k$. By adding white Gaussian noise $\bm{\nu}_k$ according to the Gaussian mechanism \cite{DworkEUROCRYPT2006}, $\bC\bz_k$ is differentially private. The transformation matrix  $\bC$ is designed to minimize the mean-square error (MSE) of the states. These papers' objectives are to preserve the differential privacy of the system measurements, which is different from our goal of preventing the statistical inference of a subset of states.

The authors in \cite{SteSPL2014} proposed to project an observation $\bz$ into a lower dimensional space to obtain $\tbz=\bC\bz$. The transformation $\bC$ is designed to maximally retain the estimation accuracy of the entire system's state and the dimension of the space projected into is predefined. This work also does not consider preserving the estimation privacy of a subset of states in a LDS.

\subsection{Our Contributions}

In this paper, we consider the use of a compressive linear transformation on the measurements at one or multiple sensors in a LDS to prevent the fusion center from estimating a set of private states with low error while still allowing it to estimate a set of public states with good accuracy. Our main contributions are as follows. 
\begin{enumerate}[(i)]
\item We formulate a utility-privacy tradeoff optimization problem for a LDS involving privacy constraints on the predicted estimation error of the private states multiple steps ahead. We show how to find the dimension of the compressive map and the map itself and propose an algorithm to achieve this. We provide a bound for the number of steps to look ahead to achieve the same privacy level in all future time steps.
\item We consider the case where there are multiple decentralized sensors that optimize their own local compressive map. We propose optimization methods for the cases where 1) there are no message exchanges between the sensors; and 2) each sensor takes turns to transmit messages to the other sensors.
\item We present extensive simulation results that demonstrate that imposing privacy constraints multiple steps ahead is necessary in some LDSs, and examine the impact of different choices of the state evolution and measurement matrices on the utility-privacy tradeoff.  We also verify the performance of our proposed approach on an empirical ultra wideband (UWB) localization system and human activity recognition data set.
\end{enumerate}

\begin{table*}[!t]
\centering
\caption{Summary of commonly-used symbols.}\label{tab:notations}
\begin{tabular}{|c|l|}
\hline
Notation & Definition \\
\hline
$\calP=\{1,\ldots,|\calP|\}$ & index set of the $|\calP|$ public states \\
\hline
$\calQ=\{|\calP|+1,\ldots,L\}$ & index set of the $|\calQ| = L - |\calP|$ private states \\
\hline
$\bx_k$, $\bz_k$ & state and measurement vector at time $k$, respectively\\
\hline
$\bF_k\in\Real^{L\times L}$, $\bH_k\in\Real^{N\times L}$ & state evolution and measurement matrices at time $k$, \cref{system} \\
\hline
$\bQ_k$, $\bR_k$ & state and measurement noise covariances at time $k$, \cref{system}\\
\hline
$\bC_k\in\Real^{M\times N}$ & compression matrix applied to the measurement at time $k$, \cref{eq:obs_compressed} \\
\hline
$\tbP_{k+n|k}$ & $n$-step prediction error covariance matrix {\it after compression} at time $k$, \cref{eq:p_k_plus_n_given_k} \\
\hline
$\bT_k$ &  $={\bf H}_k \tbP_{k|k-1}{\bf H}\T_k+{\bf R}_k$, \cref{bT}\\
\hline
${\bf G}_{k+n|k}$ & $={\bf H}_k \tbP_{k|k-1}\bF_{k+1:k+n}\T$, \cref{bG} \\
\hline
$\bD_{k+n|k}$ & $\tbP_{k+n|k}-\tbP_{k+n|k-1}$, error reduction due to measurement made at time $k$, \cref{D} \\
\hline
$\tau_k(\bC_k)$ & public error trace at time $k$ , \cref{eq:utility_func}\\
\hline
$\eta_{k+n|k}(\bC_k)$ & $n$-step look-ahead private error function at time $k$ with $\eta_k=\eta_{k|k}$, \cref{eq:privacy_multi} \\
\hline
$u_k(\bC_k)$ & utility at time $k$, \cref{eq:utility_gain} \\
\hline
$\ell_{k+n|k}(\bC_k)$ & $n$-step look ahead privacy loss at time $k$, \cref{eq:privacy_loss_kpn} \\
\hline
\end{tabular}
\end{table*}

A preliminary version of this work was presented in \cite{SongICASSP2018} in which only the single sensor case with a single step look-ahead privacy constraint was considered. This paper generalizes \cite{SongICASSP2018} to the case where there are multiple decentralized sensors with multi-step look-ahead privacy constraints. New theoretical insights and numerical results are also presented in this paper.

The rest of this paper is organized as follows. In \cref{section:pf}, we present our problem formulation, assumptions and an optimization framework to achieve an optimal utility-privacy tradeoff. In \cref{section:central}, we propose a centralized solution to find the compression matrix that optimizes the utility-privacy tradeoff at each time step. In \cref{section:dist}, we consider the decentralized case where multiple sensors are involved and proposed different optimization approaches. We present simulation results in \cref{section:simulations}, and conclude in \cref{section:c}.

{\em Notations}: We use $\Real$ to denote the set of real numbers. The normal distribution with mean $\mu$ and variance $\sigma^2$ is denoted as ${\cal N}(0,\sigma^2)$. $\min(\ba)$ is the minimum element in the vector $\ba$, $\bm{1}_n$ is the vector of all 1's of length $n$, and $\bI_n$ is the identity matrix of size $n\times n$. We use $\T$ to represent matrix transpose, ${\rm diag}({\bf A},{\bf B},\ldots)$ to represent a block diagonal matrix with submatrices ${\bf A},{\bf B},\ldots$ being the diagonal elements, ${\rm Tr}(\cdot)$ to denote the trace operation, and ${\rm vecdiag}(\bA)$ to denote a column vector consisting of diagonal entries of $\bA$. We use $\submat{\bA}{\calP,\calQ}$ to denote the sub-matrix of matrix ${\bf A}$ consisting of the entries ${\bf A}(i,j)$ for all $(i,j)\in\calP\times \calQ$, $\left[{\bf A}\right]_{:,\calQ}$ to denote the sub-matrix of matrix ${\bf A}$ consisting of columns indexed by $\calQ$, and $\left[{\bf A}\right]_{\calP}$ to denote the sub-matrix of matrix ${\bf A}$ consisting of rows indexed by $\calP$. Given a set of $n\times n$ matrices ${\bf A}_i,{\bf A}_{i+1},\ldots, {\bf A}_{j}$, we use $\bA_{i:j}$ to denote the matrix product $\bA_j \bA_{j-1} \cdots \bA_i$ if $i\leq j$, and define $\bA_{i:j}=\bI_n$ if $i>j$. The notation ${\bf A}\succeq{\bf 0}$ means that ${\bf A}$ is positive semidefinite. Let $\bA^\perp$ denote the basis matrix of the null space of $\bA$, i.e., $\bA\T \bA^\perp ={\bf 0}$. For easier reference, we summarize some of the commonly-used symbols in \cref{tab:notations}.

\section{Problem Formulation}
\label{section:pf}

We consider a LDS given by the following state and measurement equations at time step $k$:
\begin{subequations}\label{system}
\begin{align}
{\bf x}_k&={\bf F}_k {\bf x}_{k-1} + {\bf v}_{k}, \label{eq:dyn} \\
\bz_k&={\bf H}_k {\bf x}_{k} + {\bf n}_{k}, \label{eq:obs}
\end{align}
\end{subequations}
where 
${\bf x}_k\in \Real^{L}$ and $\bz_k\in \Real^{N}$ are the system's state and measurement at time $k$, respectively. The state and process noise ${\bf v}_{k}$ and ${\bf n}_{k}$ are independent, and follow zero-mean Gaussian distributions with positive definite covariances ${\bf Q}_{k}$ and ${\bf R}_{k}$, respectively. We assume that the state evolution matrices ${\bf F}_k\in \Real^{L\times L}$ are known for all $k$, and the measurement matrices $\{{\bf H}_j\in \Real^{N\times L} : j \leq k\}$, where $N\geq L$ are known only up to the current time step $k$. This assumption is made because in many applications like target tracking \cite{ReidTAC79,JulierJPROC04,HuaDis:J07}, the measurement matrix $\bH_k$ is chosen adaptively at each time step $k$. 
%For example, in the extended Kalman filter (EKF) formulation where the measurement at time $k$ is a non-linear function of the state, i.e., $\bz_k = h_k(\bx_k)+\bn_k$. EKF linearizes the non-linear function $h_k$ around the current estimate $\hat{\bx}_{k|k-1}$ to obtain $\bH_k=\left. \frac{\partial h_k(\bx)}{\partial \bx}\right|_{\bx=\hat{\bx}_{k|k-1}}$. The dependence of $\bH_k$ on $\hat{\bx}_{k|k-1}$ prevents us from knowing the measurement matrices for the future time steps. 

%<*tag:intermittent>
Another example is estimation over lossy networks where the measurement matrices are time-varying \cite{SinSchFra:J04, SchSinFra:J07}. On their routes to the gateway, sensor packets, possibly aggregated with measurements from several nodes, may become intermittent because of time-varying transmission intervals or delays \cite{DonHeeWou:J11, XiaShaChe:J09}, packet dropouts \cite{LiuGol:C04, HuaDey:J07, LiaChePan:J10, MoSin:J12, NouLeoDey:J14, SahCheSha:J07, ShiEpsMur:J10, SilSol:J13, XiaShaChe:J09}, random message exchanges depending on the availability of appropriate network links \cite{KarMou:J11}, fading channels \cite{QueAhlJoh:J13}, and other communication constraints \cite{RohMarFu:J14}. The measurement matrix $\bH_k$ is thus unknown until the sensor measurements are received at time $k$.
%</tag:intermittent>

%Note that system and
%measurement matrices ${\bf F}_k$ and ${\bf H}_k$, as well as noise parameters
%${\bf Q}_{k-1}$ and ${\bf R}_{k}$, are allowed to be time variant.

To obtain the minimum mean square estimate of the system state in \cref{system}, it is well known \cite{Kalman1960} that the Kalman filter is optimal. The Kalman filter contains two distinct phases: ``predict" and ``update". In the ``predict" phase, the state estimate and error covariance are predicted, respectively, by
\begin{align}
\hat{\bf x}_{k|k-1} &= {\bf F}_k \hat{\bf x}_{k-1|k-1}, \label{eq:pred_state}\\
\bP_{k|k-1} &= {\bf F}_k \bP_{k-1|k-1} {\bf F}\T_k + {\bf Q}_k.\label{eq:pred_err_cov}
\end{align}
In the ``update" phase, the state estimate and error covariance are updated, respectively, through
\begin{align}\label{eq:state}
\hat{\bf x}_{k|k} &= \hat{\bf x}_{k-1|k-1} + {\bf K}_k (\bz_k-{\bf H}_k\hat{\bf x}_{k|k-1}), \\
\bP_{k|k} &= ({\bf I}-{\bf K}_k {\bf H}_k) \bP_{k|k-1},\label{eq:err_cov}
\end{align}
where ${\bf K}_k=\bP_{k|k-1} {\bf H}_k\T \left({\bf H}_k\bP_{k|k-1}{\bf H}\T_k+{\bf R}_k\right)^{-1}$ denotes the Kalman gain.

In this paper, we consider the case where the state ${\bf x}_k$ may be partitioned into two parts as \begin{align}
\bx_k
&=\begin{bmatrix}
\submat{\bx_k}{\calP}\\
\submat{\bx_k}{\calQ}
\end{bmatrix},
\end{align}
where $\left[{\bf x}_k\right]_\calP\in \Real^{|\calP|}$, with $\calP=\{1,\ldots,|\calP|\}$ and $|\calP| < L$, contains the public states that are to be estimated, while $\left[{\bf x}_k\right]_\calQ\in \Real^{|\calQ|}$, with $\calQ=\{|\calP|+1,\ldots,L\}$ and $|\calQ| = L-|\calP| > 0$, are the private states containing sensitive information that we wish to protect. Our goal is to minimize the estimation errors of the public states $\left[\hat{\bf x}_{k|k}\right]_\calP$, while ensuring that the estimation errors of private states $\left[\hat{\bf x}_{k|k}\right]_\calQ$ are above a predefined threshold.

In \cite{KunSPM2017}, it is assumed that the measurement or feature vector $\bz_k$ can be mapped as $\bGamma^{(\calP)}\bz_k$ to a utility subspace and as $\bGamma^{(\calQ)}\bz_k$ to a privacy subspace. Our formulation is equivalent to a variant of the formulation in \cite{KunSPM2017} if there is no additive noise $\bn_k$ in \cref{eq:obs}, and $[\bH_k]_{:,\calP}$ is orthogonal to $[\bH_k]_{:,\calQ}$, but is in general different from \cite{KunSPM2017}. The advantage of our formulation is that in applications with a known LDS model, it is easier to specify which system states are public and private directly instead of through the measurements or observed features. Furthermore, to apply the formulation from \cite{KunSPM2017} to a LDS where we want to protect the privacy of some states over multiple time steps, will require prior knowledge of the measurement matrices $\bH_j$ for $j \geq k$, where $k$ is the current time step. In particular, such an approach is impractical if $j$ is large.

To prevent the fusion center from inferring the private states $[\bx_k]_{\calQ}$, we assume that a linear mapping or compression matrix $\bC_k\in\Real^{M\times N}$, where $1\leq M \leq N$, is applied to the measurement to obtain 
\begin{align}\label{eq:obs_compressed}
\tbz_k=\bC_k\bz_k&=
\bC_k{\bf H}_k {\bf x}_{k} + \bC_k{\bf n}_{k}.
\end{align}
Let $\tbP_{k|k}$ be the state error covariance based on the measurements $\tbz_{k'}$, where $k'\leq k$. %

%<*tag:F>
A smaller $\tbP_{k|k}(i,i)$ implies that the $i$-th state can be estimated with lower error on average. Therefore, as the utility, we aim to minimize the sum of the expected estimation errors of the public states or the public error trace at time $k$ defined as
\begin{align}\label{eq:utility_func}
\tau_k\left(\bC_k\right) = {\rm Tr}\left(\left[\tilde\bP_{k|k}\right]_{\cal P,P}\right).
\end{align}
On the other hand, the private error function at time $k$ is defined as
\begin{align}\label{eq:privacy_func}
\eta_k\left(\bC_k\right) =  {\cal F} \left({\rm vecdiag}\left(\left[\tilde\bP_{k|k}\right]_{\cal Q,Q}\right)\right),
\end{align}
where $\calF(\bx)=\bA\bx:\Real^{|\calQ|}\rightarrow \Real^{|\calF|}$ is a user-defined linear map such that $\calF(\ba) \leq \calF(\bb)$ if $\ba\leq\bb$, with $\leq$ here denoting element-wise inequality. The non-decreasing property follows iff $\bA$ has non-negative entries. 

For example, if ${\cal F}(\bx)={\bf 1}^T_{|{\cal Q}|} \bx$ with $|\calF|=1$, the private error function is the sum of the private states' error variances and
\begin{align}\label{eq:eqk1}
\eta_k\left(\bC_k\right)={\rm Tr}\left(\left[\tilde\bP_{k|k}\right]_{\cal Q,Q}\right).
\end{align} 
If the privacy of \emph{every} private state is important, we can choose ${\cal F}(\bx)=\bI_{\calQ}\bx$ and $|\calF|=|\calQ|$. Then, the private error function becomes
\begin{align}\label{eq:eqk2}
{\bm \eta}_k\left(\bC_k\right)={\rm vecdiag}\left(\left[\tilde\bP_{k|k}\right]_{\cal Q,Q}\right).
\end{align}
%</tag:F>
At each time $k$, we seek to optimize the privacy-utility tradeoff myopically as follows:
\begin{align}\label{eq:1step_problem}\tag{P0} 
\min_{\bC_k} &~ \tau_k(\bC_k), \nonumber\\
{\rm s.t.}&~\eta_k(\bC_k) \geq {\cal F}(\delta {\bf 1}_{|{\cal Q}|}), \nonumber
\end{align}
where $\delta$ is a predefined threshold, and %
%<*tag:M>
the minimization is over the set of compression matrices $\{\bC_k : \bC_k\in\Real^M,\ 1\leq M \leq N\}$. Note that we are optimizing over $\bC_k$ as well as its dimension, i.e., $M$, at every time step and \cref{eq:1step_problem} is solved sequentially for each time step $k$.
%</tag:M>

%<*tag:F2>
The private error function given in \cref{eq:eqk2} is more restrictive than that given in \cref{eq:eqk1}. In many practical problems, the private information depends on all the private states so that protecting  some (but not necessarily all) of the private states may be sufficient to protect the overall private information. For example, to protect someone's geo-location information, which consists of $x$-, $y$- and $z$- coordinates, it may be sufficient in some applications to obfuscate just one or two coordinates instead of all three to achieve reasonable geo-location privacy. This motivates the use of the private error trace in \cref{eq:eqk1} as one potential privacy measure.
%</tag:F2>

%<*tag:P1>
Problem \cref{eq:1step_problem} focuses on the privacy-utility tradeoff at the current time step without taking into account the future time steps.  As the predicted error covariance $\tilde\bP_{k+1|k}$ relates to the error covariance $\tilde\bP_{k|k}$ through \cref{eq:pred_err_cov}, i.e., 
\begin{align}\label{tbP_k_given}
\tilde\bP_{k+1|k} = {\bf F}_{k+1} \tilde\bP_{k|k} {\bf F}\T_{k+1} + {\bf Q}_{k+1},
\end{align}
$\bC_k$ affects not just the privacy-utility tradeoff at the current time step $k$, but also the tradeoff at future time steps. This is illustrated in an example below.

\begin{Example}\label{ex:flip}
Consider the case where $\left[{\bf F}_k\right]_{\cal P,P}=0$ and $\left[{\bf F}_k\right]_{\cal Q,Q}=0$, i.e.,
\begin{align*}
{\bf F}_k&=
\left[
\begin{array}{cc}
{\bf 0} & \left[{\bf F}_k\right]_{\cal P,Q}\\
\left[{\bf F}_k\right]_{\cal Q,P} &  {\bf 0}
\end{array}
\right].
\end{align*}
Recall that the prediction error covariance $\bP_{k|k-1}$ is related to $\bP_{k-1|k-1}$ via \cref{eq:pred_err_cov}. Substituting ${\bf F}_k$ into  \cref{eq:pred_state} and \cref{eq:pred_err_cov} yields
\begin{align*}
\left[
\begin{array}{c}
\left[\hat{\bf x}_{k|k-1}\right]_\calP \\
\left[\hat{\bf x}_{k|k-1}\right]_\calQ
\end{array}\right]
 &= 
\left[
\begin{array}{c}
\left[{\bf F}_k\right]_{\cal P,Q}\left[\hat{\bf x}_{k-1|k-1}\right]_\calQ \\
\left[{\bf F}_k\right]_{\cal Q,P}\left[\hat{\bf x}_{k-1|k-1}\right]_\calP
\end{array}\right], \\
\bP_{k|k-1} &= \left[
\begin{array}{cc}
\left[{\bf F}_k\right]_{\cal P,Q}  \left[\bP_{k-1|k-1}\right]_{\cal Q,Q} \left[{\bf F}_k\right]_{\cal Q,P} & *\\
* &  \left[{\bf F}_k\right]_{\cal Q,P}  \left[\bP_{k-1|k-1}\right]_{\cal P,P} \left[{\bf F}_k\right]_{\cal P,Q}
\end{array}
\right] + {\bf Q}_k.
\end{align*}
We see that the state evolution matrix ${\bf F}_{k}$ converts the public state at time $k-1$ into a linear function of the private state at time $k$, and vice versa. This means that if a low public error trace is achieved at time $k-1$, then a low private error function value at time $k$ is inevitable if \cref{eq:1step_problem} is used to design the utility-privacy tradeoff. This example shows that it is necessary to incorporate privacy constraints for future time steps into \cref{eq:1step_problem}. 
\end{Example}
%</tag:P1>

Since the dynamical model \cref{eq:dyn} is publicly known, we may predict the system's state $n$ time steps in the future. The $n$-step prediction error covariance $\tbP_{k+n|k}$ at time $k$ is given by 
\begin{align}\label{eq:p_k_plus_n_given_k}
\tilde\bP_{k+n|k} 
&= \bF_{k+1:k+n}\tilde\bP_{k|k}{\bF\T_{k+1:k+n}} + \sum_{i=1}^{n} {\bf F}_{k+i+1:k+n}{\bf Q}_{k+i}{{\bf F}\T_{k+i+1:k+n}}.
\end{align}
%\begin{align}\label{eq:p_k_plus_n_given_k}
%\tilde\bP_{k+n|k} 
%&= \bF_{k+1:k+n}\tilde\bP_{k|k}{\bF_{k+1:k+n}}\T + \sum_{i=1}^{n-1} {\bf F}_{k+i+1:k+n}{\bf Q}_{k+i}{{\bf F}_{k+i+1:k+n}}\T + {\bf Q}_{k+n}, \nonumber\\
%&= \bF_{k+1:k+n}\tilde\bP_{k|k-1}{\bF_{k+1:k+n}}\T + \sum_{i=1}^{n-1} {\bf F}_{k+i+1:k+n}{\bf Q}_{k+i}{{\bf F}_{k+i+1:k+n}}\T + {\bf Q}_{k+n} - \bF_{k+1:k+n}\bD_{k|k}(\bC_k){\bF_{k+1:k+n}}\T, \nonumber\\
%&= \tilde\bP_{k+n|k-1}- \bD_{k+n|k}(\bC_k),
%\end{align}
%where $\tilde\bP_{k+n|k-1}=\bF_{k+1:k+n}\tilde\bP_{k|k-1}{\bF_{k+1:k+n}}\T + 
%\sum_{i=1}^{n-1} {\bf F}^{k+n}_{k+i+1}{\bf Q}_{k+i}{{\bf F}^{k+n}_{k+i+1}}\T + {\bf Q}_{k+n}$, and $\bD_{k+n|k}(\bC_k)=\bF_{k+1:k+n}\bD_{k|k}(\bC_k){\bF_{k+1:k+n}}\T$. When $n=0$, $\bF_{k+1:k+n}=\bI_L$.
The \emph{look-ahead} private error function at time $k+n$ can be defined as
\begin{align}\label{eq:privacy_multi}
\eta_{k+n|k}(\bC_k) = {\cal F}\left({\rm vecdiag}\left(\left[\tilde\bP_{k+n|k}\right]_{\cal Q,Q}\right)\right).
\end{align}

Incorporating privacy constraints on the prediction error covariance $r_k$ steps ahead, we have the following optimization problem at each time step $k$:
\begin{align}\label{eq:multi-step_problem}\tag{P1}
\min_{\bC_k} &\  \tau_k(\bC_k),\\
{\rm s.t.} &\ \eta_{k+n|k}(\bC_k)\geq {\cal F}(\delta {\bf 1}_{|{\cal Q}|}),\ n=0,\ldots,r_k, \nonumber
\end{align}% 
where $\eta_{k|k}=\eta_k$ in \cref{eq:privacy_func}. When $r_k=0$, \cref{eq:multi-step_problem} becomes \cref{eq:1step_problem}. Note again that \cref{eq:multi-step_problem} is solved sequentially or myopically at each time step $k$, with the minimization over $\{\bC_k : \bC_k\in\Real^M,\ 1\leq M \leq N\}$.

Let 
\begin{align}
\bT_k&=\bH_k \tilde{\bP}_{k|k-1}\bH\T_k+\bR_k, \label{bT}
\end{align}
and for $n\geq 0$, let 
\begin{align}
\bG_{k+n|k}&=\bH_k \tbP_{k|k-1}\bF\T_{k+1:k+n}.\label{bG}
\end{align}
We define 
\begin{align}
\bD_{k+n|k}
&=\bF_{k+1:k+n}\bD_{k|k}\bF\T_{k+1:k+n},\label{Dn}
\end{align}
where
\begin{align}
\bD_{k|k} &= \tbP_{k|k-1}-\tbP_{k|k} \nn
&= \bG_{k|k}\T\bC\T_k \left(\bC_k\bT_k\bC\T_k\right)^{-1} \bC_k \bG_{k|k} \label{D0}
\end{align}
is the reduction of the error covariance due to the measurement made at time $k$. From \cref{Dn,D0}, we obtain
\begin{align}
\bD_{k+n|k}&= \bG_{k+n|k}\T \bC_k\T \left(\bC_k\bT_k\bC_k\T\right)^{-1} \bC_k \bG_{k+n|k}.\label{D}
\end{align}
From \cref{eq:p_k_plus_n_given_k} and \cref{Dn}, we have 
\begin{align}\label{tbP_kn_given_k_D}
\tbP_{k+n|k} = \tbP_{k+n|k-1} - \bD_{k+n|k}.
\end{align}
Replacing $n$ by $n+1$ and $k$ by $k-1$ in \cref{eq:p_k_plus_n_given_k}, and applying \cref{tbP_k_given}, we obtain
\begin{align}\label{P_kpn_km1}
\tbP_{k+n|k-1}
=&{\bf F}_{k+1:k+n}\tbP_{k|k-1}\bF\T_{k+1:k+n} + \sum_{i=1}^{n} {\bf F}_{k+i+1:k+n}\bQ_{k+i}\bF\T_{k+i+1:k+n}.
\end{align}
Therefore, $\tbP_{k+n|k-1}$ does not depend on $\bC_k$. We can now define the {\em utility} to be
\begin{align}\label{eq:utility_gain}
u_k(\bC_k) &= \trace{\submat{\tbP_{k|k-1}}{\calP,\calP}} - \tau_k(\bC_k) = {\rm Tr}\left(\left[\bD_{k|k}\right]_{\cal P,P} \right), 
\end{align}
and the $n$-step look-ahead \emph{privacy loss} function at time $k$ as  
\begin{align}
\ell_{k+n|k}(\bC_k) 
=&\ {\cal F}\left( {\rm vecdiag}\left(\left[\tbP_{k+n|k-1}\right]_{\cal Q,Q} \right)\right) - \eta_{k+n|k}(\bC_k)\nn
=&\ {\cal F}\left( {\rm vecdiag}\left(\left[\bD_{k+n|k}\right]_{\cal Q,Q} \right)\right).\label{eq:privacy_loss_kpn}
\end{align}
Problem \cref{eq:multi-step_problem} can then be equivalently recast as
\begin{align}\label{eq:multi-step_problem_v2}\tag{P2} 
\max_{\bC_k} &~ u_k(\bC_k) \nonumber\\
{\rm s.t. }&~ \ell_{k+n|k}(\bC_k) \leq {\cal F} \left(\bar{\bdelta}_{k+n} \right),\ n=0,\ldots,r_k, \nonumber
\end{align}
where 
\begin{align}\label{bardelta}
\bar{\bdelta}_{k+n}={\rm vecdiag}\left(\left[\tilde\bP_{k+n|k-1}\right]_{\cal Q,Q} \right) - \delta {\bf 1}_{|{\cal Q}|}
\end{align}
is the $n$-step \emph{privacy loss threshold}. From \cref{ex:flip}, we see that a sufficiently large $r_k$ at the current time step $k$ is required to ensure that there exists a $\bC_{t}$ such that $\eta_{t|t}(\bC_{t})\geq {\cal F}(\delta {\bf 1}_{|{\cal Q}|})$ for all future time steps $t > k$. In the following, we provide a lower bound for $r_k$ under different assumptions. We start off with an elementary lemma.

\begin{Lemma}\label{lemma:ABA}
Consider a matrix $\bA$ and square matrix $\bB$. Suppose $\bA\bA\T \succeq \alpha\bI$ and $\bB\succeq \beta\bI$, then $\bA \bB \bA\T \succeq \alpha \beta \bI $, where $\alpha$ and $\beta$ are non-negative scalars.
\end{Lemma}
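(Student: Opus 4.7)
The plan is to prove this by a direct quadratic form argument: to show $\bA\bB\bA\T \succeq \alpha\beta\bI$, I will verify that $\bx\T \bA\bB\bA\T \bx \ge \alpha\beta \|\bx\|^2$ for every vector $\bx$ of appropriate dimension.

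First I would introduce the substitution $\by = \bA\T \bx$, which converts the quadratic form $\bx\T(\bA\bB\bA\T)\bx$ into $\by\T \bB \by$. By the assumption $\bB \succeq \beta \bI$, we have $\by\T \bB \by \ge \beta\, \by\T\by = \beta\, \bx\T \bA\bA\T \bx$. Then I would apply the other assumption $\bA\bA\T \succeq \alpha \bI$ to get $\bx\T \bA\bA\T \bx \ge \alpha \|\bx\|^2$. The only subtlety in chaining these two inequalities is that multiplying the second by $\beta$ requires $\beta \ge 0$, which is exactly the stated hypothesis. Combining the two steps yields $\bx\T \bA\bB\bA\T \bx \ge \alpha\beta \|\bx\|^2$, as desired.

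Because the statement is essentially a two-line computation, there is no genuine obstacle; the only thing to be careful about is the role of the non-negativity hypothesis on $\alpha$ and $\beta$, which ensures that the inequalities can be chained without a sign flip and that the final bound $\alpha\beta\bI$ itself is a meaningful (positive semidefinite) lower bound. No dimensional compatibility issue arises either, since $\bA\bA\T \succeq \alpha\bI$ already forces $\bA$ to have at least as many columns as rows so that $\bA\T\bx$ has dimension matching $\bB$.
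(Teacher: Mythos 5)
Your proof is correct and is essentially the paper's own argument in quadratic-form language: the substitution $\by=\bA\T\bx$ with $\bB\succeq\beta\bI$ is exactly the paper's step $\bA(\bB-\beta\bI)\bA\T\succeq\mathbf{0}$, followed by the same application of $\bA\bA\T\succeq\alpha\bI$ scaled by $\beta\ge 0$. No meaningful difference in approach.
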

\begin{IEEEproof}
Since $\bB-\beta\bI \succeq {\bf 0}$, we have $\bA \left(\bB-\beta\bI\right) \bA\T \succeq {\bf 0}$. From $\bA \bA\T - \alpha \bI \succeq {\bf 0}$, we obtain $\bA \bB \bA\T -\alpha \beta \bI \succeq {\bf 0}$, which completes the proof.
\end{IEEEproof}

\begin{Proposition}\label{prop:min_n}
Suppose that at the current time step $k$, $\tbP_{k|k-1}\succeq \nu_k\bI$ for some $\nu_k>0$. Suppose also that for all $t \geq k$, $\bQ_t\succeq \epsilon\bI$ for some $\epsilon>0$, and $\bF_t\bF_t\T \succeq \xi\bI$ for some $\xi>0$. Then if either 
\begin{enumerate}[(i)]
\item $\xi\geq 1$, or
\item $\xi<1$, ${\cal F} (\delta \vq)  \leq \frac{\epsilon}{1-\xi} {\cal F}\left(\vq\right) $, and $\nu_k < \frac{\epsilon}{1-\xi}$, 
\end{enumerate}
and \footnote{We define $\left.\frac{1-\xi^r}{1-\xi}\right|_{\xi=1}=\underset{\xi\to1}{\lim}\frac{1-\xi^r}{1-\xi}=r$.} 
\begin{align}
r_k \geq&\ r_{k,\star} \nonumber \\
\triangleq&\ \underset{r\geq0}{\min} \left\{r:\ {\cal F}\left(\vq\right) \left(\epsilon\frac{1-\xi^r}{1-\xi} + \xi^r\nu_k\right) \geq {\cal F}(\delta \vq)\right\}  -1,\label{rkstar}
\end{align}
there exists $\bC_t$ such that $\eta_{t'|t}(\bC_t)\geq {\cal F} (\delta \vq)$, for all $t'\geq t \geq k$.
\end{Proposition}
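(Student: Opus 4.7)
The plan is to exhibit an explicit feasible compression at every step---the trivial (zero-information) compression $\bC_t$ that gives $\bD_{t|t}=\bzero$ and hence $\tbP_{t|t}=\tbP_{t|t-1}$ via \cref{tbP_kn_given_k_D}---and to show that it satisfies the privacy requirement for every $t'\geq t\geq k$. Since $\tbP_{t'|t}\preceq\tbP_{t'|t-1}$ for any choice of $\bC_t$ (by \cref{tbP_kn_given_k_D} and $\bD_{t'|t}\succeq\bzero$), trivial compression is extremal for preserving privacy, so the argument reduces to bounding $\eta_{t'|t}(\bzero)$ from below.

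The core technical step is a Lyapunov-style scalar recursion. Combining the prediction step $\tbP_{t|t-1}=\bF_t\tbP_{t-1|t-1}\bF_t\T+\bQ_t$ with Lemma \ref{lemma:ABA} (applied to $\bF_t\bF_t\T\succeq\xi\bI$) and $\bQ_t\succeq\epsilon\bI$ gives $\tbP_{t|t-1}\succeq(\xi\lambda_{\min}(\tbP_{t-1|t-1})+\epsilon)\bI$. Starting from $\tbP_{k|k-1}\succeq\nu_k\bI$ and iterating under trivial compression, an induction on $t$ yields $\tbP_{t|t-1}\succeq a_{t-k}\bI$, where $a_n=\xi a_{n-1}+\epsilon$ with $a_0=\nu_k$, whose closed form $a_n=\xi^n\nu_k+\epsilon(1-\xi^n)/(1-\xi)$ is precisely the expression inside \cref{rkstar}. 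A further application of the same bound along \cref{eq:p_k_plus_n_given_k} propagates this to $\tbP_{t'|t}\succeq a_{t'-k}\bI$.

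Monotonicity of $(a_n)$ follows from the two hypothesis branches. In case (i) $\xi\geq 1$, $a_n-a_{n-1}=(\xi-1)a_{n-1}+\epsilon\geq\epsilon>0$; in case (ii) $\xi<1$ with $\nu_k<\epsilon/(1-\xi)$, a quick induction shows $a_n$ stays below the fixed point $\epsilon/(1-\xi)$, so $a_n-a_{n-1}=\epsilon-(1-\xi)a_{n-1}>0$. By the definition of $r_{k,\star}$, this gives $a_m\calF(\vq)\geq\calF(\delta\vq)$ for all $m\geq r_{k,\star}+1$. Taking the $\calQ$-principal submatrix $[\tbP_{t'|t}]_{\calQ,\calQ}\succeq a_{t'-k}\bI$ and invoking the non-decreasing property of $\calF$ then yields $\eta_{t'|t}(\bzero)\geq a_{t'-k}\calF(\vq)\geq\calF(\delta\vq)$ whenever $t'-k\geq r_{k,\star}+1$.

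The main obstacle is the residual window $0\leq t'-k\leq r_{k,\star}$, where the scalar bound $a_{t'-k}$ is insufficient. Here the hypothesis $r_k\geq r_{k,\star}$ is used directly: the $\bC_k$ chosen to solve (P1) at time $k$ already satisfies $\eta_{k+n|k}(\bC_k)\geq\calF(\delta\vq)$ for $n=0,\ldots,r_k$. Using trivial compressions at times $k+1,\ldots,k+r_k$ preserves the predictions, so $\tbP_{t'|t}=\tbP_{t'|k}$ and hence $\eta_{t'|t}(\bzero)=\eta_{t'|k}(\bC_k)\geq\calF(\delta\vq)$ whenever $k\leq t\leq t'\leq k+r_k$. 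Splicing the look-ahead coverage ($t'-k\leq r_k$) with the noise-accumulation coverage ($t'-k>r_k\geq r_{k,\star}$) completes the argument.
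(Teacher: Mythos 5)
Your proof is correct and follows essentially the same route as the paper's: choose the trivial compression $\bC_t=\bzero$ so that $\tbP_{t'|t}=\tbP_{t'|k-1}$, lower-bound $\tbP_{t'|k-1}$ by $\bigl(\xi^{r}\nu_k+\epsilon\sum_{i=1}^{r}\xi^{r-i}\bigr)\bI$ via \cref{lemma:ABA}, and use monotonicity of the resulting scalar bound together with the definition of $r_{k,\star}$; your recursion $a_n=\xi a_{n-1}+\epsilon$ is just that same bound written iteratively. Your explicit splicing of the residual window $t'-k\le r_k$ (covered by the look-ahead constraints imposed at time $k$) with the regime $t'-k\ge r_{k,\star}+1$ makes precise a step that the paper's proof leaves implicit.
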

\begin{IEEEproof}

From \cref{tbP_kn_given_k_D}, we have for all $t'\geq t \geq k$, $\tbP_{t'|k-1} \succeq \tbP_{t'|t}$.
%<*tag:state>
If 
\begin{align}\label{tbP_t_k_condition}
\submat{\tbP_{t'|k-1}}{\calQ,\calQ} \succeq \delta \bI_{|\calQ|},
\end{align}
we can always choose $\bC_t={\bf 0}$ so that $\submat{\tbP_{t'|t}}{\calQ,\calQ} = \submat{\tbP_{t'|k-1}}{\calQ,\calQ} \succeq \delta \bI_{|\calQ|}$, which along with the non-decreasing property of $\calF$ implies that $\eta_{t'|t}(\bC_t)\geq {\cal F}(\delta \vq)$. We next show that the conditions given in the proposition statement lead to \cref{tbP_t_k_condition}.
%</tag:state>

From \cref{eq:p_k_plus_n_given_k}, after some minor manipulations using \cref{tbP_k_given}, we can rewrite $\tbP_{t'|k-1}$ as
\begin{align}\label{Ptk}
\tbP_{t'|k-1} =&\ {\bf F}_{k+1:t'}\tbP_{k|k-1}\bF\T_{k+1:t'} + \sum_{i=1}^{n} {\bf F}_{k+i+1:t'}\bQ_{k+i}\bF\T_{k+i+1:t'}.
\end{align}
Applying \cref{lemma:ABA} to each term on the right hand side of \cref{Ptk}, we obtain
\begin{align*}
\tbP_{t'|k-1}\succeq \left(\xi^r \nu_k + \sum_{i=1}^r \xi^{r-i}\epsilon \right) \bI,
\end{align*}
where $r = t'-k$. We therefore have
\begin{align*}
{\cal F}\left({\rm vecdiag}\left({\submat{\tbP_{t'|k-1}}{\calQ,\calQ}}\right)\right) 
\geq \ f(r) \triangleq {\cal F}\left(\vq \right)
\left(\epsilon \frac{1-\xi^r}{1-\xi} + \xi^r\nu_k \right).
\end{align*}
If $\xi\geq1$, $f(r)\uparrow\infty$ as $r\to\infty$. On the other hand, if $\xi<1$, we have 
\begin{align*}
f(r+1)-f(r)= {\cal F}\left(\vq\right)\left(\frac{\epsilon}{1-\xi}-\nu_k\right)(\xi^r-\xi^{r+1}),
\end{align*}
and $f(r)\uparrow {\cal F}\left(\vq\right) \frac{\epsilon}{1-\xi}$ as $r\to\infty$ if $\nu_k < \epsilon/(1-\xi)$. Therefore, under the conditions in the proposition, by choosing the number of look-ahead steps $r_k$ to satisfy $f(r_k+1) \geq {\cal F}(\delta \vq)$, $\eta_{t'|t}(\bC_t)\geq {\cal F}(\delta \vq)$ holds for all $t'\geq t \geq k$, and the proposition is proved. 
\end{IEEEproof}

\section{A Single Sensor}\label{section:central}

In this section, we consider the case where there is only a single sensor so that \cref{eq:multi-step_problem_v2} can be optimized over all $\bC_k \in \Real^{M\times N}$ and $1\leq M\leq N$ in a centralized setting. This gives the best utility-privacy tradeoff and serves as a benchmark for the decentralized case that we consider in \cref{section:dist}. We first prove an elementary lemma.

\begin{Lemma}\label{lemma:U=A}
Suppose $\bA\in\Real^{M\times N}$ with $M \leq N$ has full row rank, and $\bB\in\Real^{N\times N}$ is a positive definite matrix. Then,
\begin{align}\label{eq:ABA}
\bA\T\left(\bA \bB \bA\T \right)^{-1}\bA 
&= \bB^{-1/2}\bU\bU\T\bB^{-1/2}, 
\end{align}
where $\bU\in\Real^{N\times M}$ consists of the $M$ right unit singular vectors associated with the $M$ non-zero singular values of $\bA\bB^{1/2}$. 
\end{Lemma}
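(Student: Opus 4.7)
The plan is to compute both sides via a reduced singular value decomposition (SVD) of $\bA\bB^{1/2}$ and verify equality by direct substitution. Since $\bA$ has full row rank $M$ and $\bB^{1/2}$ is invertible, the matrix $\bA\bB^{1/2}\in\Real^{M\times N}$ also has full row rank $M$, so its thin SVD has the form
\begin{align*}
\bA\bB^{1/2} = \bV\bSigma\bU\T,
\end{align*}
where $\bV\in\Real^{M\times M}$ is orthogonal, $\bSigma\in\Real^{M\times M}$ is diagonal with strictly positive singular values (hence invertible), and $\bU\in\Real^{N\times M}$ has orthonormal columns, i.e.\ $\bU\T\bU=\bI_M$. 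This is exactly the $\bU$ appearing in the statement of the lemma.

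Next, I would solve for $\bA$ as $\bA=\bV\bSigma\bU\T\bB^{-1/2}$ and substitute into each factor on the \gls{LHS} of \cref{eq:ABA}. The middle factor simplifies cleanly because $\bU\T\bB^{-1/2}\bB\bB^{-1/2}\bU=\bU\T\bU=\bI_M$, giving
\begin{align*}
\bA\bB\bA\T = \bV\bSigma^2\bV\T, \qquad \bigl(\bA\bB\bA\T\bigr)^{-1} = \bV\bSigma^{-2}\bV\T.
\end{align*}
Then sandwiching by $\bA\T$ and $\bA$ and using $\bV\T\bV=\bI_M$ collapses the expression to $\bB^{-1/2}\bU\bU\T\bB^{-1/2}$, which matches the \gls{RHS}.

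There is no real obstacle; the only things to be careful about are (i) ensuring that $\bA\bB^{1/2}$ has full row rank so that $\bSigma$ is invertible and $\bV$ can be taken square orthogonal in the thin SVD, and (ii) confirming that the $\bU$ obtained from the SVD of $\bA\bB^{1/2}$ is precisely the object described in the lemma's statement (the right unit singular vectors corresponding to the $M$ nonzero singular values). Both follow immediately from the full-row-rank assumption on $\bA$ and the positive definiteness of $\bB$, so the proof should amount to a few lines of substitution.
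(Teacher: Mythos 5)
Your proof is correct and follows essentially the same route as the paper's: both take the truncated (thin) SVD of $\bA\bB^{1/2}$ and substitute it into the left-hand side of \cref{eq:ABA}, using the full-row-rank assumption to guarantee the singular values are nonzero. Your write-up simply makes explicit the intermediate simplification $\bA\bB\bA\T=\bV\bSigma^2\bV\T$ that the paper leaves to the reader.
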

\begin{IEEEproof}
See Appendix~\ref{app:proof_lemma:U=A}.
\end{IEEEproof}

Suppose that $\bC_k$ is full row rank (otherwise we can choose a smaller $M$). From $\bD_{k+n|k}$ defined in \cref{D} and \cref{lemma:U=A}, we obtain
\begin{align}
\bD_{k+n|k}
&= \bG_{k+n|k}\T \bT_k^{-1/2} \bU_k \bU_k\T \bT_k^{-1/2} \bG_{k+n|k}, \label{eq:p_kpn_given_k}
\end{align}
where $\bU_k$ consists of the $M$ right unit singular vectors of $\bC_k\bT_k^{1/2}$ associated with its non-zero singular values. For any $n\geq 0$ and index set $\calI$, let 
\begin{align}\label{bTheta}
\bTheta^{(\calI)}_{k+n|k}&=\submat{\bT_k^{-1/2} \bG_{k+n|k}}{:,\calI} \submat{\bT_k^{-1/2} \bG_{k+n|k}}{:,\calI}\T.
\end{align}
Then, the utility function and privacy loss function in \cref{eq:utility_gain} and \cref{eq:privacy_loss_kpn} can be expressed as functions of $\bU_k$ as follows:
\begin{align}
u_k(\bU_k) =&\ {\rm Tr}\left({\bf U}\T_k \bTheta^{(\calP)}_{k|k} {\bf U}_k \right), \label{eq:utility_gain_u} \\
\ell_{k+n|k}(\bU_k) =&\ \calF \left(\left[ {\rm Tr}\left({\bf U}\T_k \bTheta^{(\calQ_1)}_{k+n|k} {\bf U}_k \right),\ldots,  {\rm Tr}\left({\bf U}\T_k \bTheta^{(\calQ_{|\calQ|})}_{k+n|k} {\bf U}_k \right)\right]^T\right). \label{eq:privacy_loss_u}
\end{align}
where we let $\calQ_i=\{|\calP|+i\}$ be a set consisting of the index of the $i$-th private state such that $\calQ=\bigcup\limits_{i=1}^{|\calQ|}\calQ_i$.

Recall that $\calF(\bx)=\bA\bx$. The Lagrangian $L({\bf U}_k,\{\gamma_{n,f}\},\{\lambda_m\})$ of problem \cref{eq:multi-step_problem_v2} is then given by 
\begin{align}
L({\bf U}_k,\{\gamma_{n,f}\},\{\lambda_m\})=&
\sum_{m=1}^{M} 
\left[{\bf U}_k\right]\T_{:,m}
\bTheta^{(\calP)}_{k|k}
\left[{\bf U}_k\right]_{:,m} \nonumber\\
&-\sum_{n=0}^{r_k} \sum_{f=1}^{|\calF|}
\gamma_{n,f}\left(\sum_{m=1}^{M} 
\left[{\bf U}_k\right]\T_{:,m}
\sum_{j=1}^{|\calQ|} \bA(f,j)\bTheta^{(\calQ_j)}_{k+n|k}
\left[{\bf U}_k\right]_{:,m}-\right. \nonumber\\
&\left.\left[\bA\right]_{\calI_f}{\bar \bdelta}_{k+n|k}
\right) -\sum_{m=1}^{M} \lambda_m\left(\left[{\bf U}_k\right]\T_{:,m} \left[{\bf U}_k\right]_{:,m} -1\right), \label{eq:lagr_function}
\end{align}
where $\gamma_{n,f}$, $n=0,\ldots,r_k$, $f=1,\ldots,|\calF|$, $\calI_f=\{f\}$ and $\lambda_m$, $m=1,\ldots,M$, are the Lagrange multipliers. 
Differentiating $L({\bf U}_k,\{\gamma_{n,f}\},\{\lambda_m\})$ with respect to $\left[{\bf U}_k\right]_{:,m}$ and equating to zero  leads to
\begin{align}
&\left(
\bTheta^{(\calP)}_{k|k}-\sum_{n=0}^{r_k} \sum_{f=1}^{|\calF|} \gamma_{n,f}\sum_{j=1}^{|\calQ|} \bA(f,j)\bTheta^{(\calQ_j)}_{k+n|k}
\right)\left[{\bf U}_k\right]_{:,m}  =
\lambda_m \left[{\bf U}_k\right]_{:,m}, \ m=1,\ldots,M,\label{eq:kkt_stationary_v2}
\end{align}
which implies the objective of \cref{eq:multi-step_problem_v2} is maximized when ${\bf U}_k$ consists of the $M$ unit eigenvectors of $\bTheta^{(\calP)}_{k|k}- \sum_{n=0}^{r_k}\sum_{f=1}^{|\calF|} \gamma_{n,f} \sum_{j=1}^{|\calQ|} \bA(f,j)\bTheta^{(\calQ_j)}_{k+n|k}$ associated with its $M$ largest \emph{non-zero} eigenvalues. Observe that there is no need to consider zero eigenvalues in our solution as these do not change the Lagrangian in \cref{eq:lagr_function}. We see that both $\{{\bf U}_{k}\}$ and $\{\lambda_m\}$ depend on $\{\gamma_{n,f} : n=0,\ldots,r_k;~f=1,\ldots,|\calF|\}$ and $M$. 

Problem \cref{eq:multi-step_problem_v2} can be equivalently recast as
\begin{align}\label{eq:primal} 
\underset{\{\gamma_{n,f}\}}{\max}&~ u_k(\bU_k)\\
{\rm s.t. }
&~{\rm Tr}\left({\bf U}\T_k \sum_{j=1}^{|\calQ|} \bA(f,j)\bTheta^{(\calQ_j)}_{k+n|k} {\bf U}_k \right)\leq \left[\bA\right]_{\calI_f}{\bar \bdelta}_{k+n|k},\nonumber\\
&\ n=0,\ldots,r_k,~f=1,\ldots,|\calF|, \nonumber\\
&~{\bf U}_k=\eig_{\ne0}\left(\bTheta^{(\calP)}_{k|k}- \sum_{n=0}^{r_k}\sum_{f=1}^{|\calF|} \gamma_{n,f} \sum_{j=1}^{|\calQ|} \bA(f,j)\bTheta^{(\calQ_j)}_{k+n|k},M\right), \nonumber
\end{align}
where $\eig_{\ne0}\left({\bf A},M\right)$ denotes the matrix of eigenvectors of $\bA$ corresponding to its $M$ largest non-zero eigenvalues. The optimization problem \cref{eq:primal} can be solved  using standard iterative methods such as the interior-point method and sequential quadratic programming.
However, due to non-convexity, there is no guarantee of finding the global optimum using such iterative methods.
%<*tag:nonconvex>
Comparing \cref{eq:primal} (or equivalently \cref{eq:multi-step_problem_v2}) with original problem \cref{eq:multi-step_problem}, we can see that \cref{eq:multi-step_problem} is optimizing over a matrix $\bC_k$ while \cref{eq:primal} is optimizing over scalars ${\{\gamma_{n,f}\}}$. The optimal $\bU_k$ (or equivalently $\bC_k$) is then given by a closed-form expression in terms of $\{\gamma_{n,f}\}$. Hence, solving \cref{eq:primal} is computationally easier than solving \cref{eq:multi-step_problem} directly.
%</tag:nonconvex>

%\begin{Lemma}\label{lemma:kkt}
%Suppose that the inequality constraints in \cref{eq:multi-step_problem_v2} are tight, 
%for ${\bf U}_{k}(\{\gamma_n\},M)$ in \cref{eq:multi-step_problem_v2} to be optimal, there exists $\left\{ \{\gamma_n\},M\right\}$ such that  
%$\ell_k\left({\bf U}_{k}(\{\gamma_n\},M)\right) = {\bar \delta}_k$ and 
%$\ell_{k+n|k}\left({\bf U}_{k}(\{\gamma_n\},M)\right) = {\bar \delta}_{k+n|k}$. 
%\end{Lemma}
%\begin{proof}
%It follows from the Karush-Kuhn-Tucker (KKT) conditions \cite{Boyd2004} where the complementary slackness states 
%\blue{
%\begin{align}\label{eq:cs}
%\gamma_n\left(\ell_{k+n|k}\left({\bf U}_{k}(\{\gamma_n\},M)\right) - {\bar \delta}_{k+n|k}\right)&=0,\forall n=0,\ldots,T.
%\end{align}
%}
%\end{proof}
%
%\blue{We solve above $T+1$ equations for $\{\gamma_n:n=0,\ldots,T\}$ given a value of $M$ by the following numerical method:
%\begin{align*}%\label{eq:multi-step_problem_v3} \tag{P2.2} 
%\underset{\{\gamma_n\}}{\min}&~ 
%\sum_{n=0}\T\left|\ell_{k+n|k}({\bf U}_k(\{\gamma_n\},M)) - {\bar \delta}_{k+n|k}\right|
%\nonumber\\
%{\rm s.t. }
%&~\ell_{k+n|k}({\bf U}_k(\{\gamma_n\},M)) \leq {\bar \delta}_{k+n|k},n=0,\ldots,T, \nonumber\\
%&~{\bf U}_k={\rm eig}\left(\left[\bTheta_{k|k}\right]_{\cal P,P}- \sum_{n=0}\T\gamma_n \left[\bTheta_{k+n|k}\right]_{\cal Q,Q},M\right). \nonumber
%\end{align*}}

Since $\sum_{j=1}^{|\calQ|} \bA(f,j)\bTheta^{(\calQ_j)}_{k+n|k}$ are positive definite for every $f=1,\ldots,|\calF|$, the privacy loss function $\ell_{k+n|k}({\bf U}_k)$ is increasing in $M$. Hence, the privacy constraint in \cref{eq:multi-step_problem_v2} may not be feasible when $M$ is large. To determine the optimal $M$, we have the following lemma.

\begin{Lemma} \label{lemma:M}
For two positive integers $M^+$ and $M^-$ such that $M^+>M^-$, suppose that $\{\gamma_{n,f}^+\}$ and $\{\gamma_{n,f}^-\}$ are the solutions of \cref{eq:primal} when $M=M^+$ and $M=M^-$, respectively, and let $\bU_k^+$ and $\bU_k^-$ be the corresponding $\bU_k$. Then, $u_k(\bU_k^+)\geq u_k(\bU_k^-)$.
\end{Lemma}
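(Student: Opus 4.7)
The plan is to prove the monotonicity via a feasibility-preserving padding argument in the original compression-matrix formulation \cref{eq:multi-step_problem_v2}, which is equivalent to \cref{eq:primal} by \cref{lemma:U=A}. The key observation is that both $u_k(\bC_k)$ and $\ell_{k+n|k}(\bC_k)$ depend on $\bC_k$ only through the quantity $\bC_k\T (\bC_k \bT_k \bC_k\T)^{-1} \bC_k$ appearing in \cref{D0}; this quantity is invariant under invertible row operations on $\bC_k$, and hence depends only on the row space of $\bC_k$.

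Next, let $\bC_k^-$ be a full-row-rank compression matrix achieving the optimum of the $M=M^-$ problem (corresponding to $\bU_k^-$ via \cref{lemma:U=A}). I would then define
\begin{align*}
\tilde{\bC}_k^+ = \begin{bmatrix} \bC_k^- \\ \bzero \end{bmatrix} \in \Real^{M^+ \times N},
\end{align*}
which has exactly the same row space as $\bC_k^-$ and therefore yields the same utility and privacy losses at every look-ahead step. In particular, $\tilde{\bC}_k^+$ is feasible for the $M=M^+$ version of \cref{eq:multi-step_problem_v2} with objective value $u_k(\bU_k^-)$, and by optimality of $\bU_k^+$ (equivalently, of the corresponding $\bC_k^+$) for that problem we obtain $u_k(\bU_k^+) \geq u_k(\bU_k^-)$.

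The main subtlety is that $\tilde{\bC}_k^+$ is rank deficient, so \cref{D0} must be interpreted via the Moore--Penrose pseudoinverse. I would verify directly using the block structure of $\tilde{\bC}_k^+ \bT_k (\tilde{\bC}_k^+)\T$ that
\begin{align*}
(\tilde{\bC}_k^+)\T \bigl(\tilde{\bC}_k^+ \bT_k (\tilde{\bC}_k^+)\T\bigr)^{\dagger} \tilde{\bC}_k^+ = (\bC_k^-)\T \bigl(\bC_k^- \bT_k (\bC_k^-)\T\bigr)^{-1} \bC_k^-,
\end{align*}
so that $\bD_{k+n|k}$, $u_k$, and $\ell_{k+n|k}$ are unchanged by the padding. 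Alternatively, one can approximate $\tilde{\bC}_k^+$ by full-row-rank perturbations $[(\bC_k^-)\T, \veps \bV\T]\T$ for small $\veps>0$ and pass to the limit $\veps \to 0^+$ using continuity of the quantities in $\bC_k$, avoiding the pseudoinverse altogether.
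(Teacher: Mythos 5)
Your main argument is correct, and it takes a genuinely different route from the paper's. The paper works entirely inside the eigenvector parametrization of \cref{eq:primal}: writing $\bPhi_k^{\pm}$ for the matrices whose leading non-zero eigenvectors define $\bU_k^{\pm}$, it chains two inequalities --- first, that enlarging $\eig_{\ne0}(\bPhi_k^-,M^-)$ to $\eig_{\ne0}(\bPhi_k^-,M^+)$ can only increase $\Tr\left(\bU\T\bTheta^{(\calP)}_{k|k}\bU\right)$ because $\bTheta^{(\calP)}_{k|k}\succeq\bzero$, and second, that $\{\gamma_{n,f}^+\}$ is optimal for the $M=M^+$ instance. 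You instead work at the level of \cref{eq:multi-step_problem_v2} and exhibit an explicitly feasible point of the $M^+$ problem by zero-padding an optimal $\bC_k^-$; since the padded matrix has the same row space, it has identical utility \emph{and} identical privacy losses, so feasibility is automatic. This is arguably cleaner on the feasibility question: the paper's intermediate point $\eig_{\ne0}(\bPhi_k^-,M^+)$ could in principle violate the privacy constraints (the paper itself notes just before the lemma that $\ell_{k+n|k}$ increases with $M$), whereas your padded point cannot. What you give up is directness: your conclusion is a monotonicity statement about the optimal values of \cref{eq:multi-step_problem_v2}, and transferring it to the stated claim about solutions of \cref{eq:primal} leans on the asserted equivalence of the two formulations, which is derived from first-order conditions of a non-convex problem; the paper's proof never leaves \cref{eq:primal} and needs no such transfer.

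One concrete flaw in a secondary step: the proposed alternative of perturbing to $\left[(\bC_k^-)\T,\ \veps\bV\T\right]\T$ and letting $\veps\to0^+$ does not work. The quantity $\bC\T\left(\bC\bT_k\bC\T\right)^{-1}\bC$ depends only on the row space of $\bC$, and for every $\veps>0$ that row space is the span of the rows of $\bC_k^-$ together with those of $\bV$; the map is therefore constant along the perturbation path and discontinuous at $\veps=0$, so the limit yields the projector for the \emph{enlarged} row space, not the zero-padded value, and the perturbed point need not satisfy the privacy constraints. Drop that alternative and keep the pseudoinverse computation, which is correct as you state it.
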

\begin{IEEEproof}
Let 
\begin{align*}
\bPhi_k^- &=\bTheta\tc{\calP}_{k|k}-\sum_{n=0}^{r_k}\sum_{f=1}^{|\calF|} \gamma_{n,f}^- \sum_{j=1}^{|\calQ|} \bA(f,j)\bTheta^{(\calQ_j)}_{k+n|k}, \\
\bPhi_k^+ &=\bTheta\tc{\calP}_{k|k}- \sum_{n=0}^{r_k}\sum_{f=1}^{|\calF|} \gamma_{n,f}^+ \sum_{j=1}^{|\calQ|} \bA(f,j)\bTheta^{(\calQ_j)}_{k+n|k}.
\end{align*} 
Then, we have 
\begin{align*}
u_k(\bU_k^-)&= {\rm Tr}\left(\eig_{\ne0}(\bPhi_k^-,M^-)\T\bTheta\tc{\calP}_{k|k} \eig_{\ne0}(\bPhi_k^-,M^-) \right)\\
&\leq
{\rm Tr}\left(\eig_{\ne0}(\bPhi_k^-,M^+)\T\bTheta\tc{\calP}_{k|k} \eig_{\ne0}(\bPhi_k^-,M^+) \right),
%~\because\bTheta\tc{\calP}_{k|k}\succeq {\bf 0},M^+>M^-
\\
&\leq
{\rm Tr}\left(\eig_{\ne0}(\bPhi_k^+,M^+)\T\bTheta\tc{\calP}_{k|k} \eig_{\ne0}(\bPhi_k^+,M^+) \right)\\
&=u_k({\bf U}_k^+),
\end{align*}
where the first inequality follows because $\bTheta\tc{\calP}_{k|k}\succeq {\bf 0},M^+>M^-$ and the last inequality follows because the maximum of $u_k({\bf U}_k^+)$ is achieved when $\gamma_{n,f}=\gamma_n^+$, for $n=0,\ldots, r_k,f=1,\ldots,|\calF|$. The lemma is now proved.
\end{IEEEproof}

\cref{lemma:M} shows that as long as the privacy constraints in \cref{eq:multi-step_problem_v2} are satisfied, $M$ should be chosen as large as possible to maximize $u_k$. Let $\bPhi_k=\bTheta^{(\calP)}_{k|k}- \sum_{n=0}^{r_k}\sum_{f=1}^{|\calF|}\gamma_{n,f} \sum_{j=1}^{|\calQ|} \bA(f,j)\bTheta^{(\calQ_j)}_{k+n|k}$. Due to the fact that ${\rm rank}(\bA+\bB)\leq{\rm rank}(\bA)+{\rm rank}(\bB)$, ${\rm rank}\left(\bPhi_k\right)\leq \min(N,|\calP|+(r_k+1)|\calQ|)$, thus we have $M\leq \min(N,|\calP|+(r_k+1)|\calQ|)$. The distribution of the eigenvalues of $\bPhi_k$ is revealed by the following lemma.

\begin{Lemma}\label{lemma:rank}
Let ${\bf A}\succeq {\bf 0}$ and ${\bf B}\succeq {\bf 0}$ be two $N\times N$ matrices whose ranks are $r_\bA$ and $r_\bB$, respectively. If $N>r_\bA+r_\bB$, $\bC=\bA-\bB$ has at most $r_\bA$ positive eigenvalues and at most $r_\bB$ negative eigenvalues.
\end{Lemma}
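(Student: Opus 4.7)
The plan is to prove the two inertia bounds independently, using the Courant-Fischer min--max characterization of the eigenvalues of the Hermitian matrix $\bC = \bA - \bB$. The key observation is that large-dimensional null spaces of $\bA$ and $\bB$ force $\bC$ to have a definite sign on correspondingly large subspaces, which directly bounds the positive and negative inertias.

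First I would argue the bound on positive eigenvalues. Since $\bA \succeq \bzero$ with $\rank(\bA) = r_\bA$, the null space $V_\bA = \ker(\bA)$ has dimension $N - r_\bA$. For any $\bv \in V_\bA$ we have $\bv\T \bA \bv = 0$, so
\begin{align*}
\bv\T \bC \bv = -\bv\T \bB \bv \leq 0
\end{align*}
because $\bB \succeq \bzero$. Hence $\bC$ is negative semidefinite on the $(N - r_\bA)$-dimensional subspace $V_\bA$. Invoking the min--max representation $\lambda_k(\bC) = \min_{\dim W = N-k+1} \max_{\bv \in W,\,\|\bv\|=1} \bv\T \bC \bv$ with $W = V_\bA$ and $k = r_\bA + 1$ yields $\lambda_{r_\bA+1}(\bC) \leq 0$, so at most $r_\bA$ eigenvalues of $\bC$ are strictly positive.

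The negative-eigenvalue bound follows by a symmetric argument. The null space $V_\bB = \ker(\bB)$ has dimension $N - r_\bB$, and on $V_\bB$ we have $\bv\T \bC \bv = \bv\T \bA \bv \geq 0$. Applying the dual min--max formula $\lambda_k(\bC) = \max_{\dim W = k} \min_{\bv \in W,\,\|\bv\|=1} \bv\T \bC \bv$ with $W = V_\bB$ and $k = N - r_\bB$ gives $\lambda_{N-r_\bB}(\bC) \geq 0$, so at most $r_\bB$ eigenvalues are strictly negative. The hypothesis $N > r_\bA + r_\bB$ is not needed for the two inertia inequalities themselves; it ensures the conclusion is non-vacuous by guaranteeing at least $N - r_\bA - r_\bB > 0$ zero eigenvalues.

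I do not anticipate a real obstacle: the argument is a standard inertia/Sylvester-type computation, and the only subtlety is keeping the Courant-Fischer indexing straight (the matching of subspace dimension $N - r_\bA$ with index $r_\bA + 1$, and $N - r_\bB$ with index $N - r_\bB$). An equivalent route, should a more mechanical derivation be preferred, is Weyl's inequality $\lambda_{i+j-1}(\bA + (-\bB)) \leq \lambda_i(\bA) + \lambda_j(-\bB)$ with $i = r_\bA + 1$, $j = 1$, which yields $\lambda_{r_\bA+1}(\bC) \leq 0$ directly, and the companion lower Weyl bound handles the negative side.
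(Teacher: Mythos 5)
Your proof is correct, and both of your index bookkeeping steps check out: $\dim\ker(\bA)=N-r_\bA$ paired with the min--max formula at index $r_\bA+1$ gives $\lambda_{r_\bA+1}(\bC)\leq 0$, and $\dim\ker(\bB)=N-r_\bB$ paired with the max--min formula at index $N-r_\bB$ gives $\lambda_{N-r_\bB}(\bC)\geq 0$, which together are exactly the claimed inertia bounds. Your route is genuinely different from the paper's, though: the paper goes directly through Weyl's additive perturbation inequalities, $\lambda_{\bA,N}+\lambda_{-\bB,i}\leq\lambda_{\bC,i}\leq\lambda_{\bA,i}+\lambda_{-\bB,1}$, and then uses $\lambda_{\bA,i}=0$ for $i>r_\bA$ and $\lambda_{-\bB,i}=0$ for $i\leq N-r_\bB$ to sign the eigenvalues of $\bC$ in three index ranges --- precisely the "more mechanical derivation" you sketch in your closing sentence. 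Your Courant--Fischer argument is arguably more self-contained (it exhibits the subspaces on which $\bC$ is semidefinite rather than citing a perturbation theorem) and it makes transparent your correct side observation that the hypothesis $N>r_\bA+r_\bB$ is not needed for the two bounds themselves; the paper's Weyl route is shorter on the page and additionally yields the two-sided pinching $\lambda_{-\bB,i}\leq\lambda_{\bC,i}\leq\lambda_{\bA,i}$, from which it reads off that the middle block of eigenvalues $\lambda_{\bC,r_\bA+1},\ldots,\lambda_{\bC,N-r_\bB}$ is exactly zero (this is where the hypothesis $N>r_\bA+r_\bB$ is actually used, to make that middle range nonempty). Either proof is acceptable.
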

\begin{IEEEproof}
See Appendix~\ref{app:lemma:rank}
\end{IEEEproof}

Suppose the eigenvalues of $\bPhi_k$ are sorted in descending order as $\lambda_1,\ldots,\lambda_N$. \cref{lemma:rank} shows that if $N>|\calP|+(r_k+1)|\calQ|$, we should choose $\bU_k$ to consist of the $M$ unit eigenvectors of $\bPhi_k$ associated with its $M$ largest non-zero eigenvalues in $\{\lambda_1,\ldots,\lambda_{|\calP|},\lambda_{N-(r_k+1)|\calQ|+1},\ldots,\lambda_N\}$. If $N\gg |\calP|+(r_k+1)|\calQ|$, the measurement can be significantly compressed without losing utility.

We summarize our solution approach to \cref{eq:multi-step_problem} in \cref{alg:central}.

\begin{algorithm}
\caption{Centralized solver for \cref{eq:multi-step_problem}}
\label{alg:central}
\begin{algorithmic}[1]
\REQUIRE 
$\calF(\cdot), r_k, \delta, \tilde\bP_{k-1|k-1}, {\bf H}_k, {\bf R}_k, {\bf F}_{k+n}, {\bf Q}_{k+n}, n=0,\ldots,r_k$
\ENSURE $M,\bC_k={\bf U}_k\T\bT_k^{-1/2}$ 
\STATE  Compute $\{\bTheta_{k+n|k}\}$ according to \cref{bTheta}, and $\{\calF\left(\bar{\bdelta}_{k+n}\right)\}$ defined in \cref{eq:multi-step_problem_v2}
\STATE Initialize $M=\min(\{N,|\calP|+(r_k+1)|\calQ|\}$
\WHILE{$M\geq 1$}
\STATE For each $M$, find $\{\gamma_{n,f}\}$ by solving \cref{eq:primal}.
\STATE Compute $\bU_k$ by
\begin{align*}
\eig_{\neq 0}\left(\bTheta^{(\calP)}_{k|k} - \sum_{n=0}^{r_k}\sum_{f=1}^{|\calF|}\gamma_{n,f} \sum_{j=1}^{|\calQ|} \bA(f,j)\bTheta^{(\calQ_j)}_{k+n|k},M\right).
\end{align*}
\IF{$\ell_{k+n|k}(\bU_k)\leq \calF\left(\bar{\bdelta}_{k+n}\right),\ n=0,\ldots,r_k$}
\STATE break
\ENDIF
\STATE $M \leftarrow M-1$
\ENDWHILE
\end{algorithmic}
\end{algorithm}

\section{Decentralized Sensors}
\label{section:dist}
Suppose that there are $S>1$ sensors and each sensor $s$ makes a $N_s\times 1$ measurement. Dividing the measurement model in \cref{eq:obs} into $S$ parts, the measurement model at each sensor $s$ is given by 
\begin{align}
\left[\bz_k\right]_{{\cal I}_s}&=\left[{\bf H}_k\right]_{{\cal I}_s} {\bf x}_{k} + \left[{\bf n}_k\right]_{{\cal I}_s}, ~s=1,\ldots,S,\label{eq:obs_individual}
\end{align}
where ${\cal I}_s=\{1+\sum_{i=1}^{s-1}N_i,\ldots,\sum_{i=1}^{s}N_i\}$ and $\sum_{s=1}^{S}N_s=N$. The quantities $\left[\bz_k\right]_{{\cal I}_s}$, $\left[{\bf H}_k\right]_{{\cal I}_s}\in \Real^{N_s\times L}$ and $\left[{\bf n}_k\right]_{{\cal I}_s}\in \Real^{N_s}$ are, respectively, the measurement made by sensor $s$, the measurement matrix of sensor $s$, and the measurement noise. 

Each sensor $s$ applies its own compression $\pC_\JsIs \in\Real^{M_s\times N_s}$, which is a linear mapping from a vector space with dimension $N_s$ to one with dimension $M_s<N_s$, and $\calJ_s=\{1+\sum_{i=1}^{s-1}M_i,\ldots,\sum_{i=1}^{s}M_i\}$, on its measurement before sending it to the fusion center. We have
\begin{align}
\left[\tbz_k\right]_{{\cal I}_s}&= \pC_{\calJ_s,\calI_s} \cdot \left[\bz_k\right]_{{\cal I}_s} \nonumber\\
&= 
\pC_\JsIs \left[{\bf H}_k\right]_{{\cal I}_s} {\bf x}_{k} + 
\pC_\JsIs \left[{\bf n}_k\right]_{{\cal I}_s}. \label{eq:obs_individual_compressed}
\end{align}
Relating the compression map at each sensor in \cref{eq:obs_individual_compressed} to the overall compression map in \cref{eq:obs_compressed}, we see that the distributed implementation restricts the structure of the transformation matrix $\bC_{k}$ to be a block diagonal matrix whose diagonal entries are the transformation matrices $\pC_\JsIs,s=1,\ldots,S$ and the remaining entries are all zeros. The version of \cref{eq:multi-step_problem_v2} for decentralized sensors can then be formulated as follows: 
\begin{align}\label{eq:dist_multi-step_problem}\tag{P3}
\underset{\bC_k}{\max} &~  
u_k(\bC_k)\\
{\rm s.t.}
&~\ell_{k+n|k}(\bC_k)\leq \calF\left(\bar{\bdelta}_{k+n}\right),\ n=0,\ldots,r_k, \nonumber \\
&~\bC_{k}={\rm diag}\left(\pC_\JlIl,\ldots,\pC_\JSIS\right). \nonumber
\end{align}
 We firstly rewrite \cref{D} to express $\bD_{k+n|k}$ in terms of $\pC_\JlIl,\ldots,\pC_\JSIS$ as follows:
\begin{align}
\bD_{k+n|k} =&\ {\bf G}_{k+n|k}\T \bC\T_k \left(
\bC_k\bT_k\bC\T_{k+n|k}
\right)^{-1} \bC_k {\bf G}_{k+n|k},  \nonumber\\
=&\ 
\left[\pG_{{\cal I}_1}\T\pC\T_\JlIl,\ldots,
\pG_{{\cal I}_S}\T\pC\T_\JSIS\right] \nonumber\\
&\
\bPsi^{-1}_k
\left[\begin{array}{c}
\pC_\JlIl \pG_{{\cal I}_1} \\
\vdots\\
\pC_\JSIS \pG_{{\cal I}_S}
\end{array}
\right],
\label{eq:p_k_tilde}
\end{align}
where $\bPsi_k$ is defined in \cref{Psik}.
\begin{figure*}[!t]
\begin{align}
\bPsi_k &= 
\left[
\begin{array}{ccc}
\pC_\JlIl \pT_\IlIl \pC\T_\JlIl & \ldots & \pC_\JlIl \pT_\IlIS \pC\T_\JSIS \\
\vdots & \ddots & \vdots\\
\pC_\JSIS \pT_\ISIl \pC\T_\JlIl & \ldots & \pC_\JSIS \pT_\JSIS \pC\T_\JSIS
\end{array}
\right].\label{Psik}
\end{align}
\end{figure*}

To solve \cref{eq:dist_multi-step_problem} in a decentralized fashion, we aim to separate the objective into individual local objective functions at each sensor. However, without additional assumptions and reformulation of the objective, this is not possible due to the inverse of the matrix $\bPsi_k$. In the following, we consider two cases: 1) where there is no information exchange between sensors; and 2) where sensors are allowed to broadcast messages sequentially to all other sensors. In both these cases, we propose new objective functions that are separable.

\subsection{With no information exchange between sensors}
We assume that each sensor at time $k$ only knows its own measurement model, i.e., $\left[{\bf H}_k\right]_{{\cal I}_s}$, and the covariance matrix $\left[{\bf R}_k\right]_{{\cal I}_s,{\cal I}_s}$ of the process noise $\left[{\bf n}_k\right]_{{\cal I}_s}$ at sensor $s$. To make the objective function separable, we propose to ignore all the inter-sensor terms, i.e., $\pT_{{\cal I}_i,{\cal I}_j},i\neq j$, and replace $\bPsi_k$ in \cref{Psik} with the following approximation
\begin{align}
{\rm diag}\left(\pC_{{\cal I}_1,\calJ_1}\pT_{{\cal I}_1,{\cal I}_1}\pC\T_{{\cal I}_1,\calJ_1},\ldots, \pC_{{\cal I}_S,\calJ_S}\pT_{{\cal I}_S,{\cal I}_S}\pC\T_{{\cal I}_S,\calJ_S}
\right),\label{eq:psi_no_message}
\end{align}
in which only the diagonal terms containing $\pT_{{\cal I}_s,{\cal I}_s},s=1,\ldots,S$ are retained. 
Plugging \cref{eq:psi_no_message} in place of $\bPsi_k$ into \cref{eq:p_k_tilde} and using \cref{lemma:U=A} yields
\begin{align}
\bD_{k+n|k} &\approx
\sum_{s=1}^S 
\left[\bG_{k+n|k}\right]_{\calI_s}\T\pC\T_\JsIs \left(\pC_\JsIs\pT_\IsIs\pC\T_\JsIs\right)^{-1} \pC_\JsIs\pG_{\calI_s},\nonumber\\
&=
\sum_{s=1}^S \pG\T_{\calI_s}\pT_\IsIs^{-1/2}{\bf U}_k^{(s)} {{\bf U}_k^{(s)}}\T \pT_\IsIs^{-1/2}\pG_{\calI_s},
\label{eq:p_k_plus_n_given_k_no_msg}
\end{align}
where $\bU_k\tc{s}$ consists of the $M_s$ right singular vectors associated with the $M_s$ non-zero singular values of $\pC_\JsIs\pT_\IsIs^{1/2}$ and ${{\bf U}_k^{(s)}}\T{{\bf U}_k^{(s)}}=\bI_{M_s}$. From \cref{eq:p_k_plus_n_given_k_no_msg}, we can now write the utility gain and privacy loss in \cref{eq:dist_multi-step_problem} approximately as
\begin{align}
u_k(\bC_k) &\approx \sum_{s=1}^S u_k^{(s)}(\bU_k\tc{s}),\label{g_approx}\\
\ell_{k+n|k}(\bC_k) &\approx \sum_{s=1}^S \ell_{k+n|k}^{(s)}(\bU_k\tc{s}), \label{ell_approx}
\end{align}
where
\begin{align}
u_{k}^{(s)}({\bf U}^{(s)}_k) &= {\rm Tr}\left({{\bf U}^{(s)}_k}\T \bTheta^{({\cal P},s)}_{k|k} {\bf U}^{(s)}_k \right),\label{eq:utility_gain_s}\\
\ell_{k+n|k}^{(s)}({\bf U}^{(s)}_k) &= 
\calF \left(\left[ {\rm Tr}\left({{\bf U}^{(s)}_k}\T \bTheta^{(\calQ_1,s)}_{k+n|k} {\bf U}^{(s)}_k \right),\ldots,{\rm Tr}\left({{\bf U}^{(s)}_k}\T \bTheta^{(\calQ_{|\calQ|},s)}_{k+n|k} {\bf U}^{(s)}_k \right)\right]^T\right)
,\label{eq:privacy_loss_kpn_s}
\end{align}
and
\begin{align*}
\bTheta^{(\calI,s)}_{k|k}&=\pT_\IsIs^{-1/2}\left[\bG_{k|k}\right]_{\calI_s, \calI} \left[\bG_{k|k}\right]\T_{\calI_s, \calI}\pT_\IsIs^{-1/2}
\end{align*}
for any index set $\calI$.
The local ${\bf U}^{(s)}_k$, for each $s=1,\ldots,S$ can be optimized separately by solving the following problem at each sensor $s$ using a procedure similiar to \cref{alg:central}:
\begin{align}\label{eq:dist_no_message_multi-step_problem}\tag{P3.1} 
\max_{{\bf U}^{(s)}_k}&\ 
u_k^{(s)}({\bf U}^{(s)}_k) \nonumber\\
{\rm s.t. }
&\ \ell_{k+n|k}^{(s)}({\bf U}^{(s)}_k) \leq \calF\left( {\bar \bdelta}^{(s)}_{k+n}\right),\ n=0,\ldots,r_k, \nonumber\\
&\ {{\bf U}^{(s)}_k}\T {\bf U}^{(s)}_k={\bf I}_{M_s}, \nonumber
\end{align}
where 
\begin{align*}
{\bar \bdelta}^{(s)}_{k+n}={\rm vecdiag}\left(\left[\tilde\bP^{(s)}_{k+n|k-1}\right]_{\cal Q,Q} \right) - \delta\tc{s} {\bf 1}_{|\calQ|}
\end{align*}
and  
\begin{align*}
\tbP^{(s)}_{k+n|k-1}
=&{\bf F}_{k+1:k+n}\tbP^{(s)}_{k|k-1}\bF\T_{k+1:k+n}  + \sum_{i=1}^{n} {\bf F}_{k+i+1:k+n}\bQ_{k+i}\bF\T_{k+i+1:k+n}
\end{align*}
with $\tbP^{(s)}_{k|k-1}$ computed based on local information only. Since we assume that sensors do not know each other's measurement statistics, they cannot coordinate amongst themselves to choose ${\delta}\tc{s}$. For simplicity, we choose $\delta/S\leq \delta\tc{s} < \min \calF\left({\rm vecdiag}\left(\left[\tilde\bP^{(s)}_{k+n|k-1}\right]_{\cal Q,Q} \right)\right)$.  
%<*tag:nomsg>
Since \cref{ell_approx} is an approximation, there is no guarantee that solving \cref{eq:dist_no_message_multi-step_problem} at every sensor produces a global feasible solution. This is mainly due to the lack of information exchange. However, this scheme can be used to initialize a more sophisticated iterative scheme that we introduce in next subsection.

\subsection{Sequential message broadcasts}\label{subsec:info_exchange}
In the formulation in the previous subsection, $\bPsi_k$ is approximated as a block diagonal matrix in \cref{eq:psi_no_message}. However, the off-diagonal/inter-sensor terms in $\bPsi_k$ may not be negligible, and ignoring them may compromise the privacy-utility tradeoff. In this subsection, we consider the case where information is exchanged between sensors to facilitate optimization of the compression map at each sensor.

We rewrite $\bD_{k+n|k} $ in \cref{eq:p_k_tilde} to isolate sensor $s$'s compression map, $\pC_\IsJs$, from the transformations of the other sensors $\pC_\JnsIns$ (see Appendix~\ref{A2} for the derivation):  
\begin{align}
\bD_{k+n|k} =&\pG_{\calI_{\s}}\T \pC\T_\JnsIns \left[\bPhi_k\right]_\JnsJns^{-1} \pC_\JnsIns \pG_{\calI_{\s}} \nonumber\\
&+ \left(\bG_{k+n|k}^{(s)}\right)\T {\bf U}^{(s)}_k {{\bf U}^{(s)}_k}\T \bG_{k+n|k}^{(s)}-\left(\bG_{k+n|k}^{(s)}\right)\T {\bf U}^{(s)}_k {{\bf U}^{(s)}_k}\T \bG_{k+n|k}^{(\s)}  \nonumber\\
&-\left(\bG_{k+n|k}^{(\s)}\right)\T {\bf U}^{(s)}_k {{\bf U}^{(s)}_k}\T \bG_{k+n|k}^{(s)} + \left(\bG_{k+n|k}^{(\s)}\right)\T {\bf U}^{(s)}_k {{\bf U}^{(s)}_k}\T \bG_{k+n|k}^{(\s)} ,\label{eq:pkplusn_info_ex_us}
\end{align}
where $\bU_k\tc{s}$ consists of the $M_s$ right unit singular vectors of $\pC_\IsJs \left(\bT_k^{(s)}\right)^{1/2}$ associated with its non-zero singular values and 
\begin{align*}
\bT_k^{(s)} &=\pT_\IsIs - \pT_\IsIns \pC\T_\JnsIns \left[\bPhi_k\right]_\JnsJns^{-1} \pC_\JnsIns\pT_\IsIns\T, \\
\bG_{k+n|k}^{(s)}&=\left(\bT_k^{(s)}\right)^{-1/2}\pG_{\calI_{s}},\\
\bG_{k+n|k}^{(\s)}&=\left(\bT_k^{(s)}\right)^{-1/2} \pT_\IsIns \pC\T_\JnsIns \left[\bPhi_k\right]_\JnsJns^{-1} \pC_\JnsIns \pG_{\calI_{\s}},\\
\calI_{\s} &=\calI_{1}\cup \ldots \cup \calI_{s-1}\cup\calI_{s+1}\cup\ldots\cup\calI_{S}.
\end{align*}

For any $n\geq 0$ and any index set $\calI$, let 
\begin{align*}
{\bXi}\tc{\calI,s}_{k+n|k} =&
  \left[\bG_{k+n|k}^{(s)}\right]_{:, \calI}  \left[\bG_{k+n|k}^{(s)}\right]_{:, \calI}\T
- \left[\bG_{k+n|k}^{(\s)}\right]_{:, \calI} \left[\bG_{k+n|k}^{(s)}\right]_{:, \calI}\T \\
&-\left[\bG_{k+n|k}^{(s)}\right]_{:, \calI}  \left[\bG_{k+n|k}^{(\s)}\right]_{:, \calI}\T
+ \left[\bG_{k+n|k}^{(\s)}\right]_{:, \calI} \left[\bG_{k+n|k}^{(\s)}\right]_{:, \calI}\T. 
\end{align*} 
Then, the utility $u_k$ and privacy loss $\ell_{k+n|k}$ for $n\geq 0$ can be rewritten as 
\begin{align*}
u_k =& {\rm Tr}\left(\left[\bD_{k+n|k}\right]_{\cal P,P}\right) \\
=& {\rm Tr}\left({{\bf U}^{(s)}_k}\T {\bXi}\tc{\calP,s}_{k|k} {\bf U}^{(s)}_k \right)+
{\rm Tr}\left(\left[\pG_{\calI_{\s}}\T \pC\T_\JnsIns \right.\right. \\ 
&\left.\left.\left[\bPhi_k\right]_\JnsJns^{-1} 
\pC_\JnsIns \pG_{\calI_{\s}}\right]_{\cal P,P}\right),
\end{align*}
and
\begin{align*}
&\ell_{k+n|k} \nonumber\\
=& \calF\left( {\rm vecdiag}\left(\left[\bD_{k+n|k}\right]_{\cal Q,Q}\right) \right) \\
=& \calF\left(\left[{\rm Tr}\left({{\bf U}^{(s)}_k}\T {\bXi}\tc{\calQ_1,s}_{k+n|k} {\bf U}^{(s)}_k \right),\ldots,{\rm Tr}\left({{\bf U}^{(s)}_k}\T {\bXi}\tc{\calQ_{|\calQ|},s}_{k+n|k} {\bf U}^{(s)}_k \right)\right]^T \right)\\
&+
\calF\left({\rm vecdiag}\left(\left[\pG_{\calI_{\s}}\T \pC\T_\JnsIns \left[\bPhi_k\right]_\JnsJns^{-1} 
\pC_\JnsIns \pG_{\calI_{\s}}\right]_{\cal Q,Q}\right)\right).
\end{align*}

%\begin{align*}
%\pln =& {\rm Tr}\left(\left[\bD_{k+n|k}\right]_{\cal Q,Q}\right) \\
%=& {\rm Tr}\left({{\bf U}^{(s)}_k}\T {\bXi}\tc{\calQ,s}_{k+n|k} {\bf U}^{(s)}_k \right)\\
%&+
%{\rm Tr}\left(\left[\pG_{\calI_{\s}}\T \pC\T_\JnsIns \left[\bPhi_k\right]_\JnsJns^{-1}
%\pC_\JnsIns \pG_{\calI_{\s}}\right]_{\cal Q,Q}\right).
%\end{align*}

Since $u_k$ and $\ell_{k+n|k}$ depend on some information that are not available locally at each sensor, the following information need to be shared between the sensors: 
\begin{enumerate}
\item $\pC_\JsIs$, the transformations applied locally at sensor $s$;
\item $\left[\bH_{k}\right]_{\calI_{s}}$, the local measurement matrix of sensor $s$;
\item $\left[\bR_k\right]_\IsJs$, the covariance matrix of the local measurement noise at sensor $s$;
\item $\left[\bR_k\right]_\IsJns$, the correlation between the local measurement noise at sensor $s$ and that of other sensors. 
\end{enumerate} 
The first item is used to construct $\pC_\JnsIns$ while the last three items are needed to compute $\bG_{k+n|k}^{(\s)}$ at sensor $s$. However, the correlations between the noise measured at different sensors maybe difficult to know \emph{a priori} in practice if the measurement are made separately. In such a case, we assume the noise measured at different sensors are independent, i.e., $\left[{\bf R}_k\right]_{\calI_{i},\calJ_{j}}={\bf 0},\forall i\neq j$.

To solve \cref{eq:dist_multi-step_problem}, we propose an alternating optimization procedure. 
%Here, we consider two message passing schedules:
%\begin{enumerate}
%\item The first schedule is a parallel schedule, in which all sensors transmit messages simultaneously. In this case, sensor $s$ finds the optimal $\pC_\JsIs^{\{i\}}$ at iteration $i$ by solving \cref{eq:dist_multi-step_problem} in which $\pC_\JnsIns$ is substituted by $\pC_\InsIns^{\{i-1\}}$, the estimate obtained at iteration $i-1$ from the sensors other than $s$. 
%\item The second schedule is a sequential schedule, in which each sensor in turn transmits messages to all other sensors. Suppose the order of transmission is predefined as $1,2,\cdots,S$. Sensor $s$ finds $\pC_\JsIs^{\{i\}}$ at iteration $i$ using $\left[\pC^{\{i\}}_\JlIl,\ldots,\pC_{\calJ_{s-1},\calI_{s-1}}^{\{i\}}, \pC_{\calJ_{s+1},\calI_{s+1}}^{\{i-1\}},\ldots,\pC_\JSIS^{\{i-1\}}\right]$ as $\pC_\JnsIns$. This is unlike the parallel schedule where the compression maps received from the other sensors at iteration $i$ are all obtained from the previous iteration $i-1$. 
%\end{enumerate}
%The algorithmic details of both update schedules are summarized in \cref{alg:decentralized}. The message schedule has a strong influence on the number of iterations to convergence and potentially the quality of the obtained solution. 
We consider a sequential message passing schedule: each sensor in turn transmits messages to all other sensors. Suppose the order of transmission is predefined as $1,2,\cdots,S$. Sensor $s$ finds $\pC_\JsIs^{\{i\}}$ at iteration $i$ using $\left[\pC^{\{i\}}_\JlIl,\ldots,\pC_{\calJ_{s-1},\calI_{s-1}}^{\{i\}}, \pC_{\calJ_{s+1},\calI_{s+1}}^{\{i-1\}},\ldots,\pC_\JSIS^{\{i-1\}}\right]$ as $\pC_\JnsIns$. The details are summarized in \cref{alg:decentralized}. 

\begin{Proposition}\label{prop:convergence}
In \cref{alg:decentralized}, suppose that $u^{\{i\}}_k$ is the utility gain at iteration $i$. Then $u^{\{i\}}_k$ converges under the sequential schedule.
\end{Proposition}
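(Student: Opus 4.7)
\begin{IEEEproof}[Proof sketch]
The plan is to establish convergence by showing that the sequence $\{u_k^{\{i\}}\}$ is monotonically non-decreasing and bounded above, hence convergent by the monotone convergence theorem.

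First, I would argue monotonicity. At iteration $i$, when sensor $s$ performs its update, it solves a local optimization problem for $\pC_\JsIs$ while the compression matrices of all other sensors are held fixed at their most recently updated values. Because the previous compression matrix $\pC_\JsIs^{\{i-1\}}$ (or $\pC_\JsIs^{\{i\}}$ if $s$ has already been visited this round) remains a \emph{feasible} choice for sensor $s$'s subproblem --- the privacy constraints $\ell_{k+n|k}(\bC_k)\leq \calF(\bar{\bdelta}_{k+n})$ hold by induction from the previous full configuration, which is exactly the warm-start point --- sensor $s$'s optimal update can only increase (or preserve) the utility $u_k$. Iterating this argument over $s=1,\ldots,S$ in order yields $u_k^{\{i\}}\geq u_k^{\{i-1\}}$ for every $i\geq 1$.

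Next, I would establish an upper bound. From the expression $u_k(\bC_k)=\Tr(\submat{\bD_{k|k}}{\calP,\calP})$ with $\bD_{k|k}=\tbP_{k|k-1}-\tbP_{k|k}\preceq \tbP_{k|k-1}$ (since $\tbP_{k|k}\succeq \bzero$), we obtain $u_k^{\{i\}}\leq \Tr(\submat{\tbP_{k|k-1}}{\calP,\calP})$ uniformly in $i$. This finite upper bound, combined with monotonicity, gives convergence of $\{u_k^{\{i\}}\}$.

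The main subtlety --- and where I expect the most care is needed --- is the feasibility-preservation step: one must verify that the utility-gain expression and the privacy-loss expressions used in sensor $s$'s subproblem are consistent with the global quantities $u_k$ and $\ell_{k+n|k}$ evaluated at the block-diagonal $\bC_k$ assembled from the current compression matrices. This follows from the identity \cref{eq:pkplusn_info_ex_us}, which exactly decomposes $\bD_{k+n|k}$ into a term depending only on $\pC_\JnsIns$ and a term depending on $\bU_k^{(s)}$ through the shared statistics $\bG_{k+n|k}^{(s)}$ and $\bG_{k+n|k}^{(\s)}$. Hence sensor $s$'s local problem, by construction, keeps the global privacy constraints satisfied and the global utility non-decreasing, so the monotone-plus-bounded argument goes through and $u_k^{\{i\}}$ converges.
\end{IEEEproof}
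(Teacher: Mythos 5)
Your proposal is correct and follows essentially the same route as the paper: a monotonicity argument (each sensor's block update can only improve the global utility because the previous compression matrix remains feasible for its subproblem) combined with the uniform upper bound $u_k^{\{i\}}\leq \Tr\left(\submat{\tbP_{k|k-1}}{\calP,\calP}\right)$. Your explicit remark on feasibility preservation and the consistency of the local subproblem with the global objective via \cref{eq:pkplusn_info_ex_us} makes precise a step the paper leaves implicit, but it is the same argument.
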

\begin{proof}
For each sensor $s$ and iteration $i$, let $u_k^{(s)}(\pC_{\calJ_{1},\calI_{1}}^{\{i\}},\ldots,\pC_\JsIs^{\{i\}},\pC_{\calJ_{s+1},\calI_{s+1}}^{\{i-1\}},\pC_\JSIS^{\{i-1\}})$ be the utility at sensor $s$ in iteration $i$ under the sequential schedule. We have
\begin{align*}
%\label{eq:mono_decr}
u^{\{i-1\}}_k =& u^{(S)}_k(\pC_\JlIl^{\{i-1\}}, \ldots, \pC_\JSIS^{\{i-1\}}), ~\mbox{iteration $i-1$ at sensor $S$}\\
\leq&
u^{(1)}_k(\pC_{\calJ_{1},\calI_{1}}^{\{i\}},\pC_{\calJ_{2},\calI_{2}}^{\{i-1\}},\ldots,\pC_\JSIS^{\{i-1\}}), ~\mbox{iteration $i$ at sensor $1$} \\
&\vdots\\
\leq&
u^{(s)}_k(\pC_{\calJ_{1},\calI_{1}}^{\{i\}},\ldots,\pC_\JsIs^{\{i\}},\pC_{\calJ_{s+1},\calI_{s+1}}^{\{i-1\}},\pC_\JSIS^{\{i-1\}}), ~\mbox{iteration $i$ at sensor $s$} \\
&\vdots\\
\leq&
u^{(S)}_k(\pC_{\calJ_{1},\calI_{1}}^{\{i\}},\ldots,\pC_\JSIS^{\{i\}}),~\mbox{iteration $i$ at sensor $S$} \\
=& u^{\{i\}}_k,
\end{align*}
where the above inequalities follow because for each sensor $s$, $\pC_\JsIs^{\{i\}}$ is obtained by maximizing the objective of \cref{eq:dist_multi-step_problem} given 
$\pC_{\calJ_{1},\calI_{1}}^{\{i\}},\ldots,
\pC_{\calJ_{s-1},\calI_{s-1}}^{\{i\}},
\pC_{\calJ_{s+1},\calI_{s+1}}^{\{i-1\}},\ldots
\pC_\JSIS^{\{i-1\}}$. Since $u^{\{i\}}_k \leq \trace{\submat{\tbP_{k|k-1}}{\calP,\calP}}$ for all $i\geq 1$, the proposition is proved.
\end{proof}

\begin{algorithm}
\caption{Distributed algorithm: sequential update schedule at time step $k$}
\label{alg:decentralized}
\begin{algorithmic}[1]
\REQUIRE $\pC_\JsIs^{\{0\}}={\bU_{k}^{(s)}}\T \left(\bT_k^{(s)}\right)^{-1/2},\forall s$ with $M_s^{\{0\}},{\bf U}_{k}^{(s)}$ obtained by solving \cref{eq:dist_no_message_multi-step_problem}
\ENSURE $\pC_\JsIs^{\{\rm NumIter\}},s=1,\ldots,S$
\STATE Num-iter $\leftarrow$ 1 \COMMENT{iteration index} 
\WHILE{$|u_k^{\{\rm NumIter\}}-u_k^{\{\rm NumIter - 1\}}|\geq\epsilon$}  %\COMMENT{convergence criteria}  
\FOR{$s=1$ \TO $S$ } %\COMMENT{sensor index}  
\STATE Broadcast $\pC_\JsIs^{\{i-1\}}$, 
\STATE Broadcast $\left\{\left[{\bf H}_k\right]_{\calI_{s}},\left[{\bf R}_k\right]_\IsIs\right\}$ once only at $i=1$
\STATE Receive $\pC_\JnsIns^{\{i-1\}}$,
\STATE Receive $\left\{\left[{\bf H}_k\right]_{\calI_{\s}},\left[{\bf R}_k\right]_\InsIns\right\}$ once only at $i=1$ 
\STATE Update $\pC_\JsIs^{\{i\}}$ by solving \cref{eq:dist_multi-step_problem} where 
\begin{align*}
\pC_\JnsIns=&
\left[\pC_{\calJ_{1},\calI_{1}}^{\{i\}},\ldots,\pC_{\calJ_{s-1},\calI_{s-1}}^{\{i\}}, \pC_{\calJ_{s+1},\calI_{s+1}}^{\{i-1\}},\ldots,\pC_\JSIS^{\{i-1\}}\right].
\end{align*}
\ENDFOR
\STATE NumIter $\leftarrow$ NumIter $+1$
\ENDWHILE
\end{algorithmic}
\end{algorithm}

%\begin{algorithm}
%\caption{Distributed algorithm 2: sequential update schedule at time step $k$}
%\label{alg:dist_sequel}
%\begin{algorithmic} 
%\Require $M_s^{\{0\}}$,$\pC_\IsIs^{\{0\}}=\opT_\IsIs^{-1/2}{\bf U}_{k}^{(s)},s=1,\ldots,S$ with $M_s^{\{0\}},{\bf U}_{k}^{(s)}$ obtained by solving \cref{eq:dist_no_message_multi-step_problem} using \cref{alg:central}
%\Ensure $\pC_\IsIs^{\{N_{\rm iter}\}},s=1,\ldots,S$
%\For{$i=1$ to $N_{\rm iter}$} \Comment{iteration index}  
%\Procedure{In sequel}{sensor $s=1$ to $S$}\Comment{sensor index}  
%\State Broadcast $\pC_\IsIs^{\{i-1\}}$, 
%\State Broadcast $\left\{\left[{\bf H}_k\right]_{\calI_{s}},\left[{\bf R}_k\right]_\IsIs\right\}$ once only at $i=1$
%\State Receive $\left\{\bC^{(i-1)}_{r,k},\forall r\neq s\right\}$,
%\State Receive $\left\{\left[{\bf H}_k\right]_{\calI_{\s}},\left[{\bf R}_k\right]_\InsIns\right\}$ once only at $i=1$ 
%\State Update $\pC_\IsIs^{\{i\}}$ by solving \cref{eq:dist_with_message_multi-step_problem} in which $\pC_\InsIns$ is substituted by $\left[\pC_{\calI_{\{1\}},\calI_{1}}^{\{i\}},\ldots,\pC_{\calI_{s-1},\calI_{s-1}}^{\{i\}},
%\pC_{\calI_{s+1},\calI_{s+1}}^{\{i-1\}},\ldots,
%\pC_\IsIs^{\{i-1\}}\right]$
%\EndProcedure
%\EndFor
%\end{algorithmic}
%\end{algorithm}

\cref{fig:pf} summarizes the relationships between the problem formulations \cref{eq:1step_problem}-\cref{eq:dist_multi-step_problem} in \cref{section:central} and \cref{section:dist}.
\begin{figure}[!htb]
\centering
    \includegraphics[width=.9\linewidth]{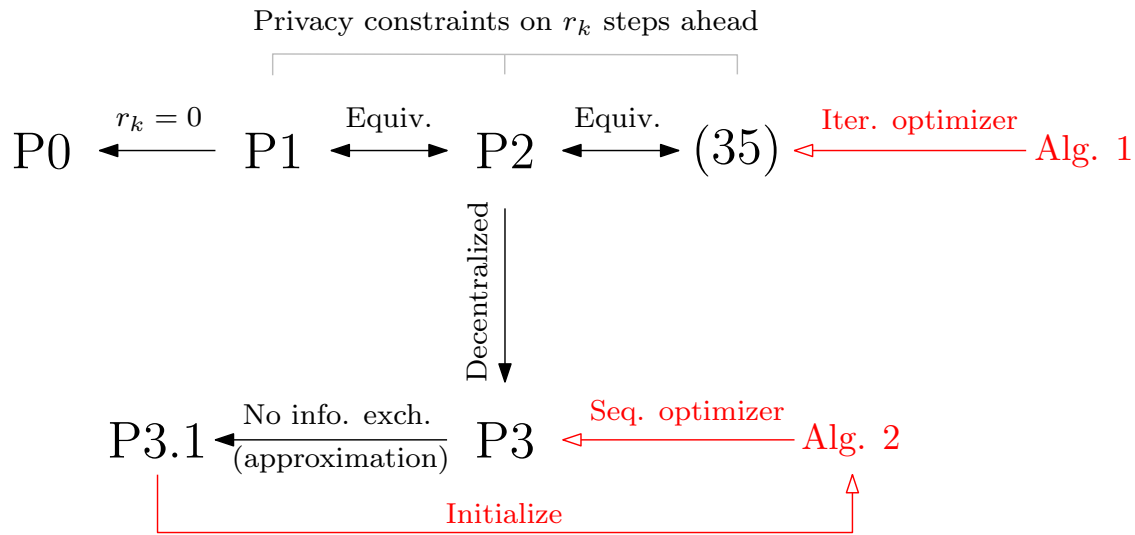}
  \caption{Relationships between the problem formulations \cref{eq:1step_problem}-\cref{eq:dist_multi-step_problem}.}
  \label{fig:pf}
\end{figure}

\section{Simulation Results}
\label{section:simulations}

In this section, we present simulation studies to understand the impact of different parameters on the utility-privacy tradeoff in both the centralized (single sensor) and distributed cases. We compare our compressive privacy scheme with the IB \cite{CheNIPS2003}, PF \cite{MakITW2014}, and CP \cite{KunSPM2017} privacy mechanisms. We use the following settings for all the simulations in this section (unless otherwise stated): the entries of ${\bf H}_k$ at each time step are drawn independently from ${\cal N}(0,1)$, ${\bf R}_k={\bf I}_N$, $\bP_{0|0}=0.01{\bf I}_L$, $|\calP|=4$, and $|\calQ|=4$. Each data point shown in the following figures is averaged over 50 independent experiments.

\subsection{Centralized case}
\label{sect:sim_central}

We first consider the centralized case where there is a single sensor. Recall that $\tau_k$ and $\eta_k$, which are defined in \cref{eq:utility_func} and \cref{eq:privacy_func}, denote the public error trace and private error function, respectively. The variable $r_k$ determines the number of future time steps that are considered in \cref{eq:multi-step_problem}, and when $r_k=0$, only the current time step is considered. We choose \cref{eq:eqk1} to be the private error function for \cref{fig:d1} and \cref{fig:w=02} so that $|\calF|=1$ and $\calF(\delta\bm{1}_{|\calQ|})=|\calQ|\delta$. We let \cref{eq:eqk2} to be the private error function for \cref{fig:d1_F} so that $|\calF|=|\calQ|$ and $\calF(\delta\bm{1}_{|\calQ|})=\delta\bm{1}_{|\calQ|}$.%

\begin{figure}[!htb]
\centering
    \includegraphics[width=.6\linewidth]{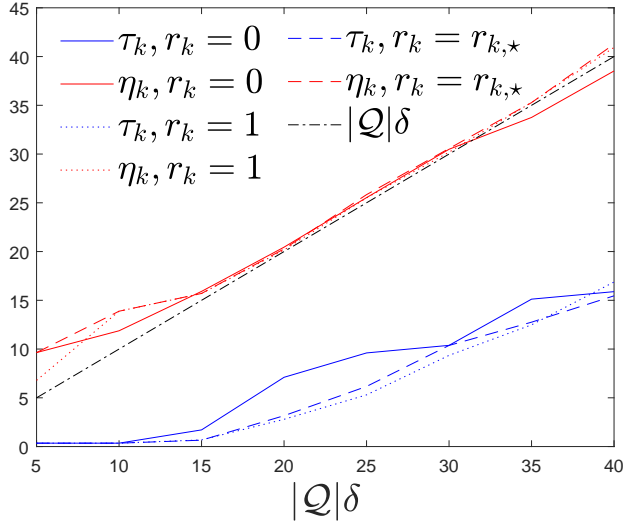}
  \caption{The public error trace $\tau_k$ and privacy error trace $\eta_k$ vs.\ the privacy threshold $|\calQ|\delta$ at $k=20$.}\label{fig:d1}
\end{figure}

\begin{figure}[!htb]
\centering
    \includegraphics[width=.6\linewidth]{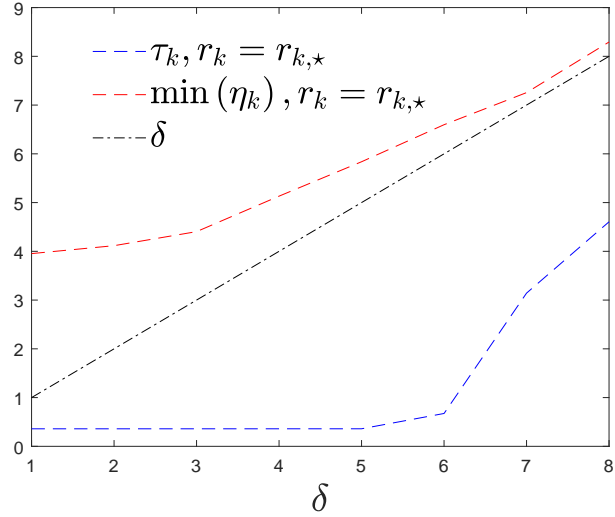}
  \caption{The public error trace $\tau_k$ and the minimum of private error variances $\min(\eta_k)$ vs.\ the privacy threshold $\delta$ at $k=20$ where the element-wise privacy constraint is considered and ${\bf Q}_k=5{\bf I}_N$.}\label{fig:d1_F}
\end{figure}

\cref{fig:d1} demonstrates the impact of $\delta$ on $\tau_k$ and $\eta_k$ with increasing $\delta$ and ${\bf Q}_k=2{\bf I}_N$. 
%<*tag:Fgen>
We randomly generate each $\bF_k$ for all $k\geq1$ by $\bF_k=\bU_k\bD_k\bV_k^{\T}$ where $\bU_k$ and $\bV_k$ contain the orthonormal bases of two randomly generated matrices, and $\bD_k$ is a diagonal matrix whose diagonal entries are uniformly drawn from $[1,1.2]$.  The results shown are at time step $k=20$. 
%</tag:Fgen>

%<*tag:infeasible>
Recall in \cref{alg:central} that we keep reducing the compression dimension $M$ until  feasibility of the privacy constraint is achieved. In the case where $r_k$ is too small, the privacy constraint cannot be satisfied even when $M$ is reduced to 0, which implies that the information from previous time steps less than $k$ allows us to infer the private states at time $k$ better than the privacy constraint. In such a case, we set $M=0$ in our simulation result.  We see from \cref{fig:d1} that $\eta_k < |\calQ|\delta$ for $r_k=0$ when $\delta$ is sufficiently large.
%</tag:infeasible>

%<*tag:fig3>
In \cref{fig:d1,fig:d1_F}, we also show the case where $r_k=1$ and $r_k= r_{k,\star} = \underset{r\geq0}{\min}\ \left\{r: 2 r + \nu_k \geq \delta\right\}-1$ (see \cref{rkstar} of \cref{prop:min_n} where $\xi=1$ and $\epsilon=2$).  We see that in these cases, the privacy constraint $\eta_k\geq |\calQ|\delta$ is always satisfied. Moreover, we can also see in \cref{fig:d1,fig:d1_F} that the estimation of public states is clearly compromised as $\delta$ increases.
%</tag:fig3>

Recall that matrices ${\bf F}_{k+n}$, for $n=0,\ldots,r_k$,  play important roles in multi-step utility-privacy tradeoffs. To quantify the impact of ${\bf F}_{k+n}$, we let $\omega\in[0,1]$ and choose 
\begin{align*}
\bF_{k+n}&=
\left[
\begin{array}{cc}
\omega \left[{\bf F}\right]_{\cal P,P} & (1-\omega)\left[{\bf F}\right]_{\cal P,Q}\\
(1-\omega)\left[{\bf F}\right]_{\cal Q,P} & \omega \left[{\bf F}\right]_{\cal Q,Q}
\end{array}
\right],
\end{align*}
where the entries of $\bF$ for each $\bF_{k+n}$ are drawn independently from ${\cal N}(0,1)$, and each row of ${\bf F}$ is normalized to have unit norm. A smaller $\omega$ means that the public and private states are more correlated in the next time step. In \cref{fig:w=02}, we set $\omega=0.2$ to be small. When $r_k=0$, both public and private error traces evolve in a zigzag pattern and the privacy constraint $\eta_k\geq |\calQ|\delta$ is violated at every other time step. This is due to the high correlation between the public states $\submat{\bx_k}{\calP}$ at time $k$ and the private states $\submat{\bx_{k+1}}{\calQ}$ at time $k+1$. For example, a small public error trace $\tau_k$ at time $k$ yields a small $\trace{\submat{\tbP_{k+1|k}}{\calQ,\calQ}}$ at time $k+1$, which leads to a small private error trace $\eta_{k+1}$ at time $k+1$. On the other hand, if we set $r_k=1$, the additional privacy constraint $\eta_{k+1}\geq |\calQ|\delta$ ensures that $\eta_k\geq |\calQ|\delta$ is feasible from $k=2$ onwards. 

\begin{figure}[!htb]
\centering
    \includegraphics[width=.6\linewidth]{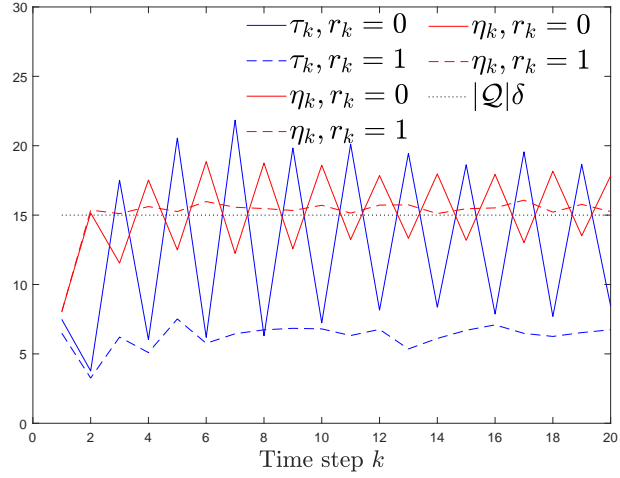}
\caption{The  public error trace $\tau_{k}$ and private error trace $\eta_{k}$ over 20 time steps where ${\bf Q}_k=2{\bf I}_N$, $|\calQ|\delta=15$ and $r_k=0,1$. }\label{fig:w=02}
\end{figure}

%\begin{figure}[!htb]
%\centering
%  \subfigure[$\omega=0.2$]{
%    \includegraphics[width=.55\linewidth]{figures/c_T_omg02.eps}
%    \label{fig:w=02}
%	} \\
%  \subfigure[$\omega=0.8$]{
%    \includegraphics[width=.55\linewidth]{figures/c_T_omg08.eps}
%    \label{fig:w08}
%	}
%\caption{The utility \blue{loss $u^{(\calL)}_{k}$} and privacy \blue{gain $p^{(\calG)}_{k}$} (left) over 20 time steps where ${\bf Q}_k=2{\bf I}_N$, $\blue{|\calQ|}\delta=15$ and $r_k=0,1$. }\label{fig:w}
%\end{figure}

\subsection{Comparison amongst compressive privacy-preserving techniques}
\label{sect:cp}

If the state and process noises in \cref{system} are Gaussian random variables, then IB \cite{CheNIPS2003}, PF \cite{MakITW2014}, and CP \cite{KunSPM2017} can be regarded as compressive privacy-preserving techniques with different utility and privacy measures.  In this subsection, we briefly review these methods adapted to our problem formulation and compare them with our proposed approach,  where for every time $k$, we let $r_k=0$ and choose the private error trace \cref{eq:eqk1} as the privacy measure for our proposed approach. Since $|\calF|=1$, we write $\gamma_{n,1}$ in \cref{eq:lagr_function} as $\gamma_k$ to use the same symbol as the tradeoff parameters of the methods we compare with.

\begin{figure}[!htb]
\centering
\includegraphics[width=.6\linewidth]{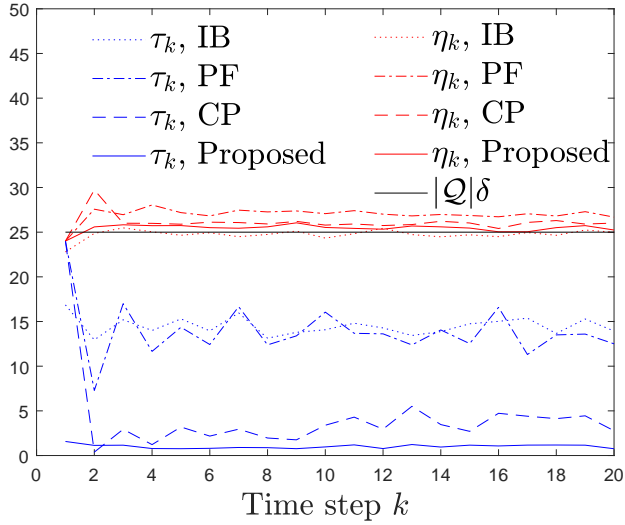}
\caption{Comparison of IB, PF, CP and proposed scheme in terms of $\tau_k$ while $\bH_k$ is a random matrix and $\eta_k=25$. We set $\bQ_k=4\bI_4$ and $r_k=0$. }\label{fig:rand}
\end{figure}

\begin{figure}[!htb]
\centering
\includegraphics[width=.6\linewidth]{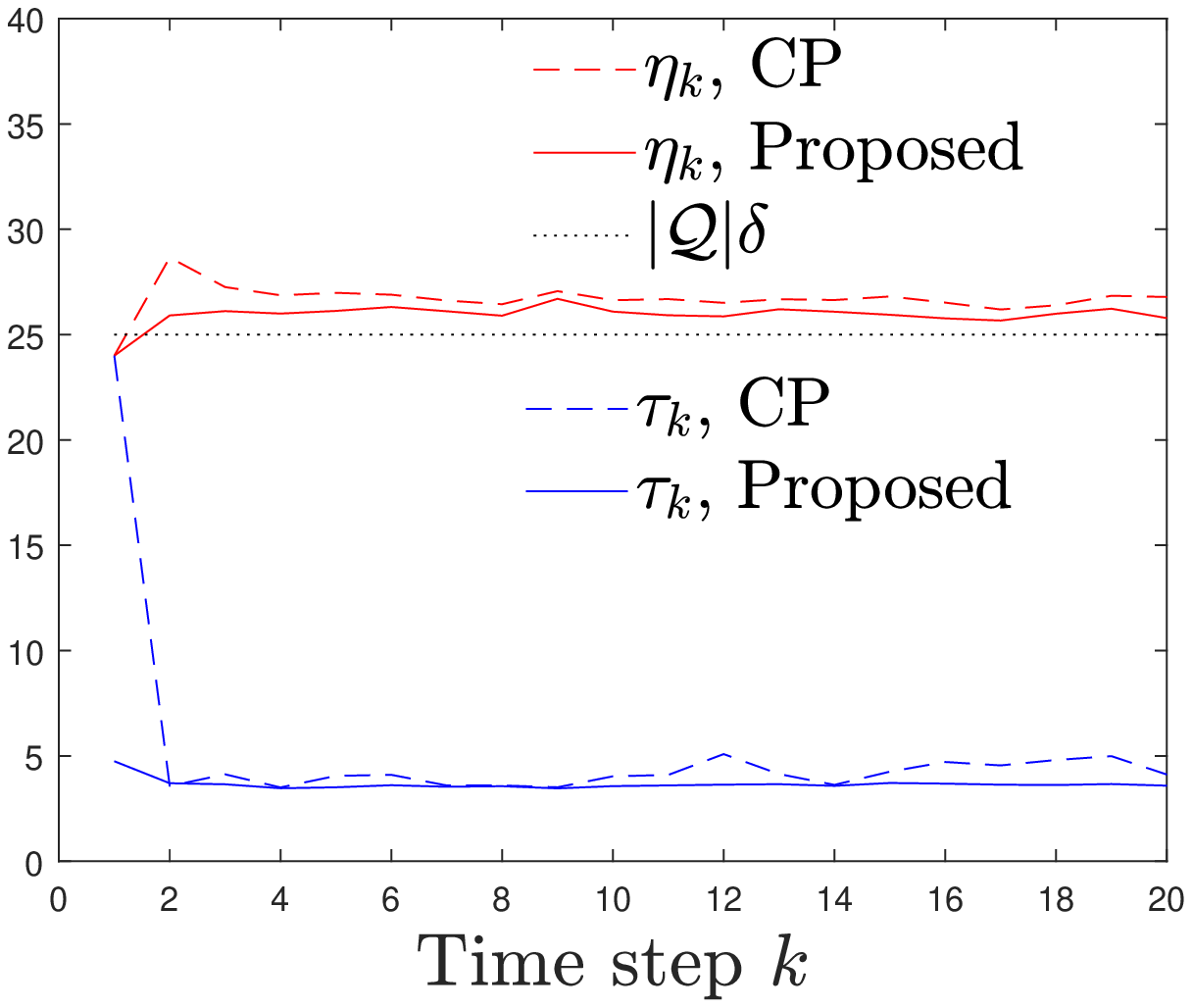}
\caption{Comparison of IB, PF, CP and proposed scheme in terms of $\tau_k$ while $\bH_k$ is an orthogonal matrix and $\eta_k=25$. We set $\bQ_k=4\bI_4$ and $r_k=0$. }\label{fig:orth}
\end{figure}

\begin{enumerate}[(i)]
\item IB finds the optimal compressive matrix $\bC_k$ by 
\begin{align*}
\underset{\bC_k}{\min} ~ I(\bz_k;\bC_k\bz_k)-\gamma_k I(\bC_k\bz_k;\by_k),
\end{align*}
where $\by_k = \left[\bx_k\right]_{\cal P}$ and $\gamma_k$ is a positive constant. The optimal compression is given by
\begin{align*}
\bC_k = \left[\alpha_1 \bv_1,\ldots,\alpha_M\bv_M \right]\T,
\end{align*}
where $\bv_i$, for $i=1,\ldots,M$ are the left eigenvectors of 
\begin{align*}
\left(\bT_{k} - \bSigma_{z_k,y_k} \left[\tbP_{k|k-1}\right]^{-1}_{\cal P, \cal P} \bT_k\right)\bT_{k}^{-1}
\end{align*}
with $\bSigma_{z_k,y_k} = \left[\bH_k\right]_{:,\cal P} \left[\tbP_{k|k-1}\right]_{\cal P,P}+\left[\bH_k\right]_{:,\cal Q} \left[\tbP_{k|k-1}\right]_{\cal Q,P}$, sorted by their corresponding ascending eigenvalues $\lambda_1,\ldots,\lambda_M$, and 
\begin{align*}
\alpha_i=\sqrt{\frac{\gamma_k(1-\lambda_i)-1}{\lambda_i \bv_i\T \bT_k \bv_i}}.
\end{align*}
 
\item PF finds the optimal compressive matrix $\bC_k$ by
\begin{align*}
\underset{\bC_k}{\max} ~ I(\bz_k;\bC_k\bz_k)-\gamma_k I(\bC_k\bz_k;\by_k),
\end{align*}
where $\by_k = \left[\bx_k\right]_{\cal Q}$ and $\gamma_k$ is a positive constant. The optimal compression is given by
\begin{align*}
\bC_k = \left[\alpha_1 \bv_1,\ldots,\alpha_M\bv_M \right]\T,
\end{align*}
where $\bv_i$, for $i=1,\ldots,M$ are the left eigenvectors of 
\begin{align*}
\left(\bT_{k} - \bSigma_{z_k,y_k} \left[\tbP_{k|k-1}\right]^{-1}_{\cal Q, \cal Q} \bT_k\right)\bT_{k}^{-1}
\end{align*}
with $\bSigma_{z_k,y_k} = \left[\bH_k\right]_{:,\cal P} \left[\tbP_{k|k-1}\right]_{\cal P,Q}+\left[\bH_k\right]_{:,\cal Q} \left[\tbP_{k|k-1}\right]_{\cal Q,Q}$, sorted by their corresponding descending eigenvalues $\lambda_1,\ldots,\lambda_M$.

%The optimal $\bC^*_k$ is the same as the $\bC^*_k$ found in IB except $\bv_i,i=1,\ldots,M$ are the left eigenvectors of $\bSigma_{z_k|y_k}\bSigma_{z_k}^{-1}$ sorted by their corresponding descending (instead of ascending in IB) eigenvalues $\lambda_1,\ldots,\lambda_M$. 
%In PF, we want to minimize the information about $\by_k$ while preserving the information about $\bz_k$. We may let $\by_k$ be the private state $\left[\bx_k\right]_{\cal Q}$, then the conditional covariance matrix
%$\bSigma_{z_k|y_k} = \bSigma_{z_k} - \bSigma_{z_k,y_k} \left[\bP_{k|k-1}\right]^{-1}_{\cal Q, \cal Q} \bSigma\T_{z_k,y_k}$,  and the cross-covariance matrix
%$\bSigma_{z_k,y_k} = \left[\bH_k\right]_{:,\cal P} \left[\bP_{k|k-1}\right]_{\cal P,Q}+\left[\bH_k\right]_{:,\cal Q} \left[\bP_{k|k-1}\right]_{\cal Q,Q}$.

\item CP finds the optimal  $\bC_k$ by 
\begin{align*}
\underset{\bC_k}{\max} ~ I(\bGamma^{(\calP)}_k\bz_k;\bC_k\bz_k)-\gamma_k I(\bGamma^{(\calQ)}_k\bz_k;\bC_k\bz_k),
\end{align*}
where $\bGamma^{(\calP)}_k=\left(\left[\bH_k\right]_{:,\cal Q}^\perp\right)\T$ and $\bGamma^{(\calQ)}_k=\left(\left[\bH_k\right]_{:,\cal P}^\perp\right)\T$. The optimal solution of $\bC_k$ is given by
\begin{align*}
\bC_k^* =\eig\left( \bOmega_k - \gamma_k \bPi_k, \bT_k , M \right),
\end{align*}
where ${\rm eig}\left({\bf A},{\bf B},M\right)$ is a matrix consisting of the $M$ principal generalized unit eigenvectors of the matrix pencil $\left({\bf A},{\bf B}\right)$, and 
\begin{align*}
\bOmega_k &= \bT_k \left(\bGamma^{(\calP)}_k\right)\T \left( \bGamma^{(\calP)}_k \bT_k \left(\bGamma^{(\calP)}_k\right)\T \right)^{-1} \bGamma^{(\calP)}_k \bT_k,\\
\bPi_k &= \bT_k \left(\bGamma^{(\calQ)}_k\right)\T \left( \bGamma^{(\calQ)}_k \bT_k \left(\bGamma^{(\calQ)}_k\right)\T \right)^{-1} \bGamma^{(\calQ)}_k \bT_k.
\end{align*}
%To adapt CP to our problem formulation, we rewrite \cref{eq:obs} as 
%$\bz_k = \left[\bH_k\right]_{:,\cal P} \left[\bx_k\right]_{\cal P} + \left[\bH_k\right]_{:,\cal Q} \left[\bx_k\right]_{\cal Q} + \bn_k$, then the mapping matrix $\bU_k$ is chosen to be $\left(\left[\bH_k\right]_{:,\cal Q}^\perp\right)\T$ so that $\bU_k\bz_k$ eliminates the entire private subspace, and $\bP_k$ is chosen to be $\left(\left[\bH_k\right]_{:,\cal P}^\perp\right)\T$ so that $\bP_k\bz_k$ eliminates entire public subspace. 
\end{enumerate}

%To make our method and CP comparable, we rewrite \cref{eq:1step_problem} as 
%\begin{align*}
%\min_{\bC_k} &~ u_k(\bC_k)-\gamma p_k(\bC_k),
%\end{align*}
%where the privacy constrain in \cref{eq:1step_problem} is replaced by a predefined Lagrange parameter $\gamma$ and the compressive dimension $M$ is fixed since CP doesn't optimize over $M$. The optimal solution of $\bC_k$ is found based on \cref{eq:kkt_stationary_v2}
%\begin{align*}
%\bC_k^* = {\rm eig}\left(\left[\bTheta_{k|k}\right]_{\cal P,P}- \gamma \left[\bTheta_{k|k}\right]_{\cal Q,Q},M\right).
%\end{align*}

\cref{fig:rand} and \cref{fig:orth} compares the private error trace $\tau_k$ and private error trace $\eta_k$ of the different schemes IB, PF, CP, and our proposed approach where $r_k=0$. The tradeoff parameter $\gamma_k$ and the dimension $M$ are chosen to make $\eta_k$ equal to (or as close as possible to) $|\calQ|\delta=25$ when $k$ is large.
%<*tag:Fgen2> 
The entries of $\bF_k$ are drawn independently from ${\cal N}(0,1)$, and each row of ${\bf F}_k$ is normalized to have unit norm. 
%</tag:Fgen2> 
In \cref{fig:rand} where $\bH_k$ is a random matrix, we see that our proposed scheme, whose privacy metric is estimation variance, yields the lowest $\tau_k$ while CP yields slightly higher $\tau_k$ and IB and PF yield the highest $\tau_k$. In CP, the utility projection $\bGamma^{(\calP)}_k$ and privacy projection $\bGamma^{(\calQ)}_k$ cannot capture, respectively, the entire utility subspace $\left[\bH_k\right]_{:,\cal P}$ and the entire privacy subspace $\left[\bH_k\right]_{:,\cal Q}$, unless $\bH_k$ is an orthogonal matrix. \cref{fig:orth} shows the results when $\bH_k$ is an orthogonal matrix. We see that $\tau_k$ in our proposed scheme is always lower than that in CP while $\eta_k \geq |\calQ|\delta$ from $k=3$ onwards for both schemes.

\subsection{Decentralized sensors}
\label{sect:sim_dist}
In this subsection, we consider the case with multiple decentralized sensors. Let ${\bf Q}_k=4{\bf I}_L$, $r_k=r_{k,\star}$ in \cref{rkstar}, $\delta=3$ and $N=30$. We randomly generate each  $\bF_k$ for all $k\geq1$ so that its singular values are uniformly drawn from $[1,1.2]$, and use \cref{eq:eqk2} as the private error function. \cref{fig:sufficient} uses $S=3$ sensors and each sensor $s$'s measurement has dimension $N_s=N/S=10$, which is greater than the number of unknowns $L=8$, whereas \cref{fig:deficient} uses $S=5$ sensors with $N_s=N/S=6$, which is less than $L$ thus making each sensor an under-determined system. Comparing \cref{fig:sufficient} where $N_s=10$ and \cref{fig:deficient} where $N_s=6$, we notice that the compression matrix of $N_s=6$ is a submatrix of the compression matrix of $N_s=10$. Thus, $\tau_k$ for $N_s=10$ is less than that for $N_s=6$. 
%<*tag:fig78>
In \cref{fig:sufficient,fig:deficient}, while the sequential scheme yields $\min(\eta_k)=\delta$ for all time steps,  the "no info. exch." scheme yields $\min(\eta_k)<\delta$ thus no privacy guarantee is ensured. 
In \cref{fig:deficient} where each sensor is an under-determined system, the estimation errors of both public and private states are infinitely large at each sensor. Therefore, the no information exchange scheme at each sensor $s$ will choose $M_s$ to be $N_s$ for all $s=1,\ldots, S$, i.e., no compression is used. As a result, after aggregating the measurements from all sensors at fusion center, the estimation errors of both public and private states are as low as the unsanitized ones. This explains why both $\min(\eta_k)$ and $\tau_k$ obtained using the no information exchange scheme are very small in \cref{fig:deficient}.

%</tag:fig78>

\begin{figure}[!htb]
\centering
    \includegraphics[width=.6\linewidth]{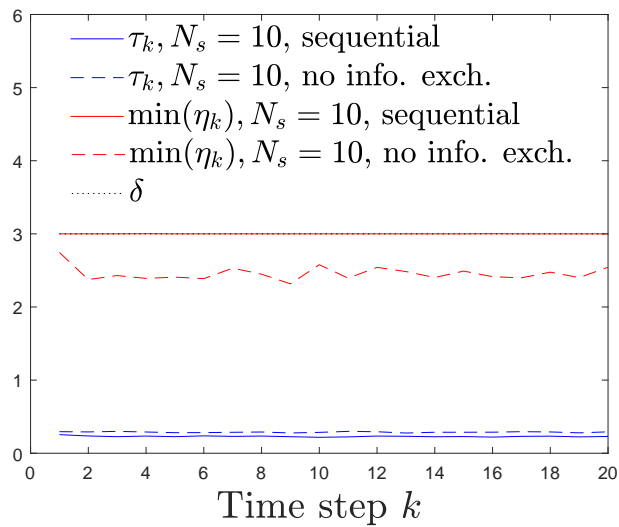}
\caption{The performance of the sequential scheme where $\text{Num-iter}=10$ and $N_1,\ldots,N_3=10$, in terms of public error trace $\tau_k$ and the minimum private error variance $\min(\eta_k)$.}\label{fig:sufficient}
\end{figure}

\begin{figure}[!htb]
\centering
    \includegraphics[width=.6\linewidth]{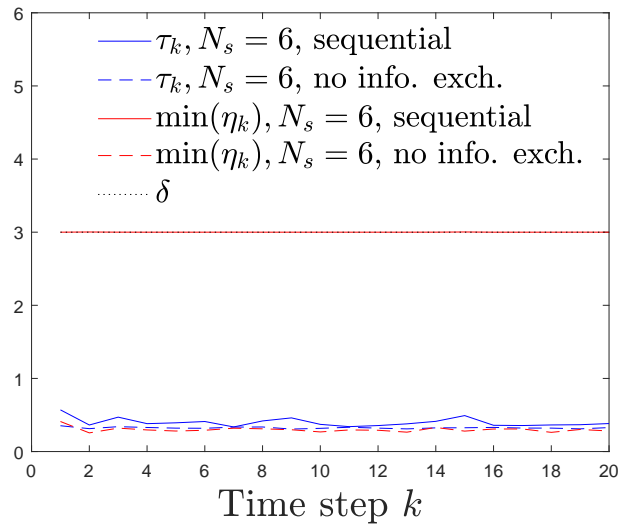}
\caption{The performance of the sequential scheme where $\text{Num-iter}=10$ and $N_1,\ldots,N_5=6$, in terms of public error trace $\tau_k$ and the minimum private error variance $\min(\eta_k)$.}\label{fig:deficient}
\end{figure}

\cref{fig:conv}, which uses the same settings as in \cref{fig:sufficient}, shows how $\tau_k$ and $\min(\eta_k(\bC_k))$  evolve over the time steps $k$ for the sequential scheme. Both $\tau_k$ and $\min(\eta_k(\bC_k))$ converge after a few iterations. 

\begin{figure}[!htb]
\centering
    \includegraphics[width=.6\linewidth]{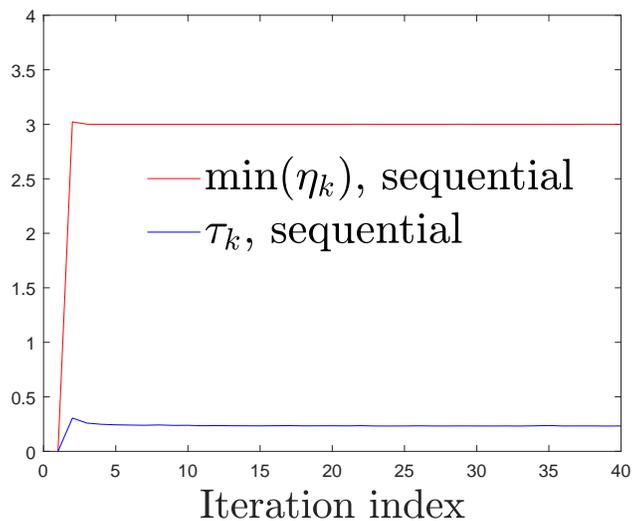}
  \caption{Convergence of $\tau_k$ and $\min(\eta_k(\bC_k))$ using \cref{alg:decentralized}, where $S=3$, $N_1,\ldots,N_3=10$ and $\delta=3$. The results shown are for time step $k=20$.}
  \label{fig:conv}
\end{figure}

\section{Experimental Results}
\label{sect:ekf}

\subsection{Privacy-aware localization}
We conduct an experiment using DecaWave UWB sensors \cite{dwm1001} for localization. We place 5 anchor nodes at known locations $\bp^{(A)}\in\Real^{2\times 5}$ and estimate a mobile node's 2-dimensional (2D) trajectory. The mobile node's state at time $k$ is denoted as $\bx_k = \left[v_{k}, \theta_k, p_{x,k}, p_{y,k}\right]\T$. The public state $\left[\bx_k\right]_{\cal P} = v_k$ is the mobile node's speed while the private state $\left[\bx_k\right]_{\cal Q} = \left[\theta_k, p_{x,k}, p_{y,k} \right]\T$ contains the node's heading and 2D location. The measurements are anchor-to-mobile ranges, which are a non-linear function of the state: $\bz_k = h_k(\bx_k)+\bn_k$. We use the extended Kalman filter (EKF) in our estimation procedure. At each time $k$, we linearize $h_k$ and $f_k$ around the estimate $\hat{\bx}_{k|k-1}$ and define the measurement matrix and the transition matrix, respectively, to be $\bH_k = \left. \frac{\partial h_k(\bx)}{\partial \bx}\right|_{\bx=\hat{\bx}_{k|k-1}}$ and $\bF_k = \left. \frac{\partial f_k(\bx)}{\partial \bx}\right|_{\bx=\hat{\bx}_{k-1|k-1}}$, where 
$f_k(\bx_{k})=\left[v_{k},\theta_{k},p_{x,k}+\Delta v_k\cos\theta_k,p_{y,k}+\Delta v_k\sin\theta_k
\right]\T$  with $\Delta=0.1$ being the sampling interval. The transition matrix  is approximated by
\begin{align*}
\bF_k=\left[
\begin{array}{cccc}
1&0&0&0\\
0&1&0&0\\
\Delta\cos\hat{\theta}_{k-1}& -\Delta\sin\hat{\theta}_{k-1} \hat{v}_{k-1}& 1& 0\\
\Delta\sin\hat{\theta}_{k-1}&  \Delta\cos\hat{\theta}_{k-1} \hat{v}_{k-1}& 1& 0
\end{array}
\right].
\end{align*}
%We consider the example of tracking a moving robot using 6 anchor nodes whose locations are given by  
%\begin{align*}
%\bp^{(A)} = 
%\left[
%\begin{array}{cccccc}
%0 & 5 & 10 & 0 & 5 & 10 \\
%0 & 0 & 0 & 10 & 10 & 10
%\end{array}
%\right],
%\end{align*}
%where the entries in the first row represent the $x$-axis coordinates of the $N=6$ anchor nodes while the entries in the second row represent their $y$-axis coordinates. 

The measurement $\bz_k\in \Real^{5\times 1}$ contains range measurements between the mobile node and anchor nodes with 
\begin{align*}
\left[\bz_k\right]_i = \left\|\left[\bp^{(A)}\right]_{:,i}-\left[\bx_k\right]_{\cal Q}\right\|+\left[\bn_k\right]_i,i=1,\ldots,5,
\end{align*}
where $\|\cdot\|$ denotes the Euclidean norm. Under the EKF framework, we let the measurement matrix $\bH_k\in\Real^{5\times 4}$ to be given by 
\begin{align*}
\left[\bH_k\right]_i = -\left[ 0,0,\frac{\left[\bp^{(A)}\right]_{1,i}-\left[\hat{\bx}_{k|k-1}\right]_3}{\left\|\left[\bp^{(A)}\right]_{:,i}-\left[\hat{\bx}_{k|k-1}\right]_{\cal Q}\right\|}, \frac{\left[\bp^{(A)}\right]_{2,i}-\left[\hat{\bx}_{k|k-1}\right]_4}{\left\|\left[\bp^{(A)}\right]_{:,i}-\left[\hat{\bx}_{k|k-1}\right]_{\cal Q}\right\|}\right], \ i=1,\ldots,5,
\end{align*}
where $\hat{\bx}_{k|k-1}$ is the predicted state before measurements at time $k$. Recall that we want to protect the mobile node's location while estimating its velocity, which however is not directly observed. In \cref{fig:ekf_pub} and \cref{fig:ekf_pri}, we consider the centralized setting where the mobile node collects the ranging measurements and sanitizes them before sending to a server. We set $\bP_0=0.01\bI_4$ and $\bR_k=0.04\bI_5,\bQ_k=0.04\bI_4$, and $r_k=0$ for all time steps $k$. The private error function used is \cref{eq:eqk1}. 

The mobile node's state is estimated over 318 time steps with or without compressive sanitization. 
We let $|\calQ|\delta=2$. 
The trajectory estimated with compressive sanitization (red crosses in \cref{fig:ekf_pri}) is noticeably distorted as compared to the ground true line (black line) thus the private information is protected whereas the private information is exposed in the trajectory estimated without compressive sanitization (blue dashes), which matches the ground truth line well. 
On the other hand, the public information (red line in \cref{fig:ekf_pub}), i.e., the speed of the mobile node matches fairly well with the unsanitized one (black line). This experiment shows the  proposed approach is also applicable to non-linear dynamical systems that can be approximated using the EKF framework.

\begin{figure}[!htb]
\centering
    \includegraphics[width=.8\linewidth]{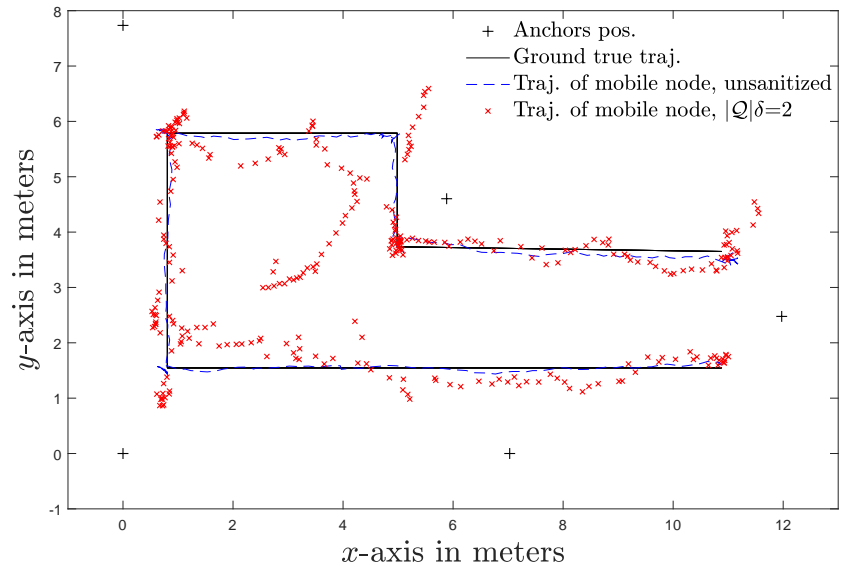}
\caption{Estimation of the mobile node's locations (private states).}
\label{fig:ekf_pri}
\end{figure}

\begin{figure}[!htb]
\centering
    \includegraphics[width=.6\linewidth]{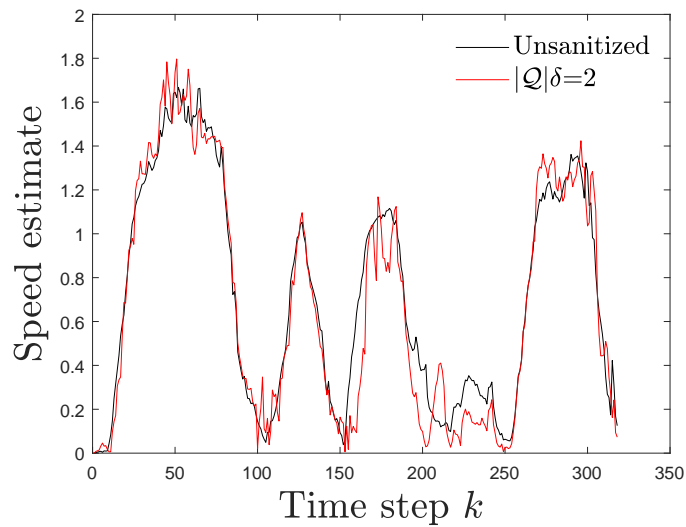}
\caption{Estimation of the mobile node's speed (public states).}
\label{fig:ekf_pub}
\end{figure}

\subsection{Privacy-aware human activity recognition}
In this section, we test our approach using the public data set for human activity recognition (HAR) from \cite{har}. We consider three activity labels: walking, walking upstairs and walking downstairs, in which walking is the public label while walking upstairs/downstairs are the private labels (e.g., these two activities can reveal how many floors the person's house has). We evaluate the impact of all 561 features HAR provides on the public or private activity labels using 10-fold cross-validation and rank them according to information gain. This process is automated using WEKA \cite{weka}. We choose five features ($|\calP|=5$) that have the biggest impact on walking and five features ($|\calQ|=5$) that have the biggest impact on walking upstairs/downstairs. The public features and private features are mutually exclusive. There are 30 subjects involved in HAR. For each subject, we update all $L=10$ features successively at each time step using our approach, where we set $\bP_{0|0}=0.01\bI_{10}$, $\bF_k=\bI_{10}, \bQ_k=0.2\bI_{10}$, and $\bH_k\in\Real^{N_k\times L}$ to be an identity matrix $\bI_L$ with its rows being randomly dropped with probability of 0.8 at each time step to simulate lossy transmission, and $\bR_k=0.1\bI_{N_k}$. \cref{fig:har} shows the public error trace $\tau_k$ and private error trace $\eta_k$ where we choose \cref{eq:eqk1} to be the private error function for \cref{fig:har} so that $|\calF|=1$ and $\calF(\delta\bm{1}_{|\calQ|})=|\calQ|\delta$ and $\delta=1$ and results are obtained by averaging over 30 subjects. We see that our framework is able to generate a low public error trace $\tau_k$ and a high private error trace $\eta_k$.
\begin{figure}[!htb]
\centering
    \includegraphics[width=.6\linewidth]{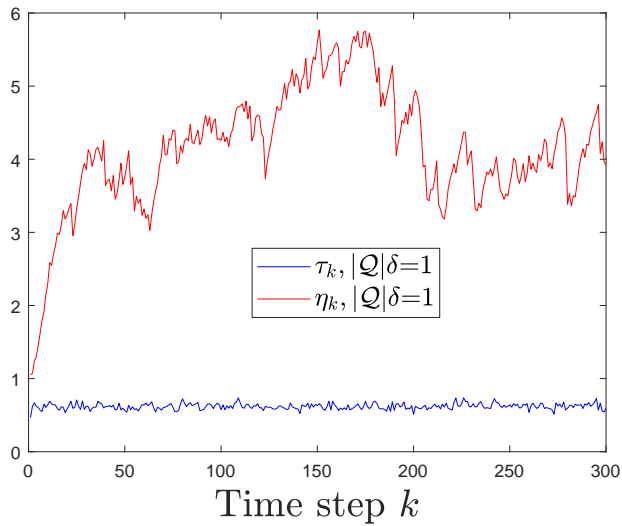}
\caption{The public error trace $\tau_k$ and private error trace $\eta_k$ for estimating public features and private features, respectively.}
\label{fig:har}
\end{figure}

%\clearpage
\section{Conclusion}
\label{section:c}
In this paper, we have investigated the use of compressive privacy schemes in a LDS to prevent the fusion center from estimating a set of private states accurately while still allowing it to estimate a set of public states with good accuracy. We developed an optimization framework to find the optimal compression matrix at each time step to achieve an optimal tradeoff between the utility at the current time step and privacy protection at multiple steps in the future. We showed that this approach allows us to ensure the same level of privacy in all future time steps in a LDS. We develop algorithms to solve for the optimal compression matrix in both the centralized and decentralized settings. For the decentralized case, we proposed the sequential update schedule. Extensive simulations are performed to verify the performance of our proposed approaches with comparisons to other methods in the literature, which are however not designed for LDS. Two empirical experiments are conducted to verify that our proposed approach on real-world time series data.

%<*tag:sideinfo>
In this paper, we have not considered the case where side information at each time step may be used in privacy attacks to infer the private states. In our formulation, such \emph{initial} side information can be captured in the prior distribution at time step 0. It is of interest in future research work to consider the availability of side information at every time step.
%</tag:sideinfo>
%<*tag:model>
Moreover, our approach is model dependent and thus accurate measurement and state models are essential for our approach. 
%</tag:model>

%\clearpage
\appendices

%\section{Semidefinite Relaxation (SDR) of (\ref{eq:multi-step_problem_v2})}
%\label{A:sdp}
%In this appendix, we present an alternative way to solve problem \cref{eq:multi-step_problem_v2} for a given $M$. It can be shown that \cref{eq:multi-step_problem_v2} is equivalent to
%\begin{align}  \label{eq:sdp1}
%\underset{\bX_k}{\max}&~ {\rm Tr}\left(\bTheta^{(\calP)}_{k|k} \bX_k \right)\\
%{\rm s.t. }
%&~{\rm Tr}\left(\bTheta^{(\calQ)}_{k+n|k} \bX_k \right)\leq \bar{\delta}_{k+n},\ n=0,\ldots,r_k, \nonumber\\
%&~{\rm rank}\left(\bX_k \right) = M, \nonumber \\
%&~{\rm Tr}\left(\bX_k \right) = M, \nonumber \\
%&~{\bf 0}\preceq\bX_k \preceq \bI_{N}. \nonumber
%\end{align}
%We have $\bU_k={\rm eig}\left(\bX_k,M\right)$. The only difficult constraint in \cref{eq:sdp1} is the rank constraint ${\rm rank}\left(\bX_k \right) = M$, which is nonconvex. Thus, we drop it and let ${\rm Tr}\left(\bX_k \right) \leq M$ to obtain the following relaxed version of Problem \cref{eq:sdp1}:
%\begin{align}  \label{eq:sdp2}
%\underset{\bX_k}{\max}&~ {\rm Tr}\left(\bTheta^{(\calP)}_{k|k} \bX_k \right)\\
%{\rm s.t. }
%&~{\rm Tr}\left(\bTheta^{(\calQ)}_{k+n|k} \bX_k \right)\leq \bar{\delta}_{k+n},\ n=0,\ldots,T, \nonumber\\
%&~{\rm Tr}\left(\bX_k \right) \leq M, \nonumber \\
%&~{\bf 0}\preceq\bX_k \preceq \bI_{N}. \nonumber
%\end{align}
%which can be solved by semidefinite programming (SDP). However, solving SDP is much less computationally efficient than solving $\cref{eq:primal}$. 

\section{Proof of Lemma \ref{lemma:U=A}}\label{app:proof_lemma:U=A}

Let $\bA'=\bA\bB^{1/2}$. Let ${\bf V} {\bf \Lambda} {\bf U}\T$ be the truncated version of the singular value decomposition (SVD) of $\bA'$, where ${\bf \Lambda}\in\Real^{M\times M}$ is a diagonal matrix with non-zero singular values being its diagonal elements, ${\bf V}\in\Real^{M\times M}$ contains the $M$ left singular vectors, and $\bU$ is as defined in the lemma statement. Substituting the SVD of $\bA'$ into the left-hand side of \cref{eq:ABA}, we obtain the right-hand side. This completes the proof.

\section{Proof of Lemma \ref{lemma:rank}}\label{app:lemma:rank}

Let $\{\lambda_{\bA,i},i=1,\ldots,N\}$ and $\{\lambda_{-\bB,i},i=1,\ldots,N\}$ denote the eigenvalues of $\bA$ and $-\bB$, respectively, sorted in descending order. The eigenvalues of ${\bf C}$, denoted as $\{\lambda_{\bC,i},i=1,\ldots,N\}$ sorted in descending order, satisfy Weyl's inequality:
\begin{align*}
\lambda_{\bA,N}+\lambda_{-\bB,i}
\leq
\lambda_{C,i}
\leq 
\lambda_{\bA,i}+\lambda_{-\bB,1}, i=1,\ldots,N.
\end{align*}
Since $N>r_\bA+r_\bB$, $\lambda_{\bA,i}=0$ for $i>r_{\bA}$ and $\lambda_{-\bB,i}=0$ for $i\leq N-r_\bB$. Then, we have
\begin{align*}
0\leq \lambda_{\bC,i} \leq \lambda_{\bA,i}, &\ 1\leq i \leq r_\bA, \\
\lambda_{\bC,i}=0, &\ r_\bA+1\leq i \leq N-r_\bB \\
\lambda_{- \bB,i} \leq \lambda_{\bC,i} \leq  0,&\ N-r_\bB+1 \leq i \leq N.
\end{align*}
The proof is now complete.

\section{Derivation of (\ref{eq:pkplusn_info_ex_us})}
\label{A2}
Let ${\bPi}_{i,j}$ be a permutation matrix that swaps the $i$th and the $j$th columns of the identity matrix ${\bf I}_S$. Then, right multiplying a matrix
by ${\bPi}_{s\rightarrow 1}={\bPi}_{s,s-1}{\bPi}_{s-1,s-2}\cdots{\bPi}_{2,1}$ ends up moving the $s$th column of this matrix to its $1$st column and it's easy to verify that ${\bPi}_{s\rightarrow 1}{\bPi}_{s\rightarrow 1}\T=\bI_s$. In what follows, we firstly show that $\bD_{k+n|k}$ does not depend on the order of sensors and then separate the terms depending on $\pC_\JsIs$ in $\bD_{k+n|k}$ from the terms depending on $\pC_\JnsIns$. From \cref{eq:p_k_tilde}, we have
\begin{align}
\bD_{k+n|k} =&
\left[\pG_{\calI_1}\T\pC\T_{\calJ_1,\calI_1},\ldots,\pG_{\calI_S}\T\pC\T_\JsIs\right]
{\bPi}_{s\rightarrow 1} \nonumber\\
&{\bPi}_{s\rightarrow 1}\T
{\bPsi_k }^{-1}
{\bPi}_{s\rightarrow 1}
{\bPi}_{s\rightarrow 1}\T
\left[\begin{array}{c}
\pC_\JlIl \pG_{\calI_1} \\
\vdots\\
\pC_\JsIs \pG_{\calI_S}
\end{array}
\right],
\nonumber \\
=&
\left[\pG_{\calI_1}\T\pC\T_\JlIl,\ldots,\pG_{\calI_S}\T\pC\T_\JsIs\right]
{\bPi}_{s\rightarrow 1} \nonumber\\
&\left(
{\bPi}_{s\rightarrow 1}\T
{\bPsi_k }
{\bPi}_{s\rightarrow 1}
\right)^{-1}
{\bPi}_{s\rightarrow 1}\T
\left[\begin{array}{c}
\pC_\JlIl\pG_{\calI_1} \\
\vdots\\
\pC_\JsIs\pG_{\calI_S}
\end{array}
\right],
\nonumber\\
=&
\left[\pG\T_{\calI_s}\pC\T_\JsIs, \pG\T_{\calI_{\s}}\pC\T_\JnsIns\right] 
 \nonumber\\
& 
\left[\begin{array}{cc}
\pC_\JsIs\pT_\IsIs\pC\T_\JsIs & 
\pC_\JsIs\pT_{\IsIns}\pC\T_\JnsIns \\
\pC_\JnsIns\pT_{\IsIns}\T\pC\T_\JsIs & 
[\bPhi_k]_{\InsIns}
\end{array}
\right]^{-1}  \nonumber\\
&\left[\begin{array}{c}
\pC_\JsIs \pG_{\calI_s} \\
\pC_\JnsIns \pG_{\calI_{\s}}
\end{array}
\right], \nonumber\\
=&
\pG_{\calI_s}\T\pC\T_\JsIs \left(\pC_\JsIs \bT_k^{(s)} \pC\T_\JsIs\right)^{-1} \nonumber\\
&\pC_\JsIs\pG_{\calI_s} \nonumber\\
&-
\pG_{\calI_s}\T\pC\T_\JsIs
\left(\pC_\JsIs  \bT_k^{(s)}  \pC\T_\JsIs\right)^{-1} \nonumber\\
&~~~
\pC_\JsIs \pT_{\IsIns}\pC\T_\JnsIns [\bPhi_k]^{-1}_{\InsIns} \pC_\JnsIns\pG_{\calI_{\s}}  \nonumber\\
&-
\pG_{\calI_{\s}}\T\pC\T_\JnsIns [\bPhi_k]^{-1}_{\InsIns} \pC_\JnsIns\pT_{\IsIns}\T \nonumber\\
&\pC\T_\JsIs
 \left(\pC_\JsIs \bT_k^{(s)} \pC\T_\JsIs\right)^{-1} 
\pC_\JsIs\pG_{\calI_s}  \nonumber\\
&+
\pG_{\calI_{\s}}\T  \pC\T_\JnsIns [\bPhi_k]^{-1}_{\InsIns}  \pC_\JnsIns\pT_{\IsIns}\T \nonumber\\
&\pC\T_\JsIs\left(\pC_\JsIs \bT_k^{(s)} \pC\T_\JsIs\right)^{-1}
\pC_\JsIs \pT_{\IsIns}\nonumber\\ 
&\pC_\JnsIns\T [\bPhi_k]^{-1}_{\InsIns} \pC_\JnsIns \pG_{\calI_{\s}} \nonumber\\
&+ 
\pG_{\calI_{\s}}\T \pC\T_\JnsIns [\bPhi_k]^{-1}_{\InsIns} \pC_\JnsIns \pG_{\calI_{\s}}, \nonumber
\end{align} 
where the last equality follows from the matrix inversion lemma. Applying \cref{lemma:U=A}, we obtain \cref{eq:pkplusn_info_ex_us}.

%\section*{Acknowledgments}
%
%The research was partially supported by the ST Engineering – NTU Corporate Lab through the NRF corporate lab@university scheme Project Reference C-RP10B.

%\newpage

\bibliographystyle{IEEEtran}
\bibliography{IEEEabrv,StringDefinitions,privacy}

% Generated by IEEEtran.bst, version: 1.14 (2015/08/26)
\begin{thebibliography}{10}
\providecommand{\url}[1]{#1}
\csname url@samestyle\endcsname
\providecommand{\newblock}{\relax}
\providecommand{\bibinfo}[2]{#2}
\providecommand{\BIBentrySTDinterwordspacing}{\spaceskip=0pt\relax}
\providecommand{\BIBentryALTinterwordstretchfactor}{4}
\providecommand{\BIBentryALTinterwordspacing}{\spaceskip=\fontdimen2\font plus
\BIBentryALTinterwordstretchfactor\fontdimen3\font minus
  \fontdimen4\font\relax}
\providecommand{\BIBforeignlanguage}[2]{{%
\expandafter\ifx\csname l@#1\endcsname\relax
\typeout{** WARNING: IEEEtran.bst: No hyphenation pattern has been}%
\typeout{** loaded for the language `#1'. Using the pattern for}%
\typeout{** the default language instead.}%
\else
\language=\csname l@#1\endcsname
\fi
#2}}
\providecommand{\BIBdecl}{\relax}
\BIBdecl

\bibitem{SunSGC2017}
K.~Sun, I.~Esnala, S.~M. Perlaza, and H.~V. Poor, ``Information-theoretic
  attacks in the smart grid,'' in \emph{Proc. IEEE Int. Conf. Smart Grid
  Communications}, 2017.

\bibitem{ChenMBS2006}
J.~Chen, K.~Kwong, D.~Chang, J.~Luk, and R.~Bajcsy, ``Wearable sensors for
  reliable fall detection,'' in \emph{Proc. Annual Int. Conf. of the IEEE
  Engineering in Medicine and Biology Society}, 2006, pp. 3551--3554.

\bibitem{TaySPT0608}
W.~P. Tay, J.~N. Tsitsiklis, and M.~Z. Win, ``On the impact of node failures
  and unreliable communications in dense sensor networks,'' \emph{IEEE Trans.
  Signal Process.}, vol.~56, no.~6, pp. 2535--2546, Jun 2008.

\bibitem{AlemdarCN2010}
H.~Alemdar and C.~Ersoy, ``Wireless sensor networks for healthcare: A survey,''
  \emph{Computer Networks}, vol.~54, no.~15, pp. 2688--2710, 2010.

\bibitem{YuIET0714}
M.~Yu and D.~Wang, ``Model-based health monitoring for a vehicle steering
  system with multiple faults of unknown types,'' \emph{IEEE Trans. Ind.
  Electron.}, vol.~61, no.~7, pp. 3574--3586, Jul 2014.

\bibitem{VeugenICASSP15}
T.~Veugen and Z.~Erkin, ``Content-based recommendations with approximate
  integer division,'' in \emph{Proc. IEEE Int. Conf. Acoustics, Speech, and
  Signal Processing}, 2015.

\bibitem{TayITT0908}
W.~P. Tay, J.~N. Tsitsiklis, and M.~Z. Win, ``Data fusion trees for detection:
  Does architecture matter?'' \emph{IEEE Trans. Inf. Theory}, vol.~54, no.~9,
  pp. 4155--4168, Sep 2008.

\bibitem{ChenSPT0312}
H.~Chen, B.~Chen, and P.~Varshney, ``A new framework for distributed detection
  with conditionally dependent observations,'' \emph{IEEE Trans. Signal
  Process.}, vol.~60, no.~3, pp. 1409--1419, Mar 2012.

\bibitem{LengJSP0115}
M.~Leng, W.~P. Tay, T.~Q.~S. Quek, and H.~Shin, ``Distributed local linear
  parameter estimation using gaussian {SPAWN},'' \emph{IEEE Trans. Signal
  Process.}, vol.~63, no.~1, pp. 244--257, Jan. 2015.

\bibitem{TayJSTSP0315}
W.~P. Tay, ``Whose opinion to follow in multihypothesis social learning? {A}
  large deviations perspective,'' \emph{IEEE J. Sel. Topics Signal Process.},
  vol.~9, no.~2, pp. 344--359, Mar 2015.

\bibitem{HoSPT1015}
J.~Ho, W.~P. Tay, T.~Q.~S. Quek, and E.~K.~P. Chong, ``Robust decentralized
  detection and social learning in tandem networks,'' \emph{IEEE Trans. Signal
  Process.}, vol.~63, no.~19, pp. 5019--5032, Oct 2015.

\bibitem{HartProcIEEE1992}
G.~W. Hart, ``Nonintrusive appliance load monitoring,'' \emph{Proc. IEEE},
  vol.~80, no.~12, pp. 1870--1891, dec 1992.

\bibitem{SankarSGT0613}
L.~Sankar, S.~R. Rajagopalan, S.~Mohajer, and H.~V. Poor, ``Smart meter
  privacy: A theoretical framework,'' \emph{IEEE Trans. Smart Grid}, vol.~4,
  no.~2, pp. 837--846, Jun 2013.

\bibitem{HongIFST0917}
Y.~Hong, W.~M. Liu, and L.~Wang, ``Privacy preserving smart meter streaming
  against information leakage of appliance status,'' \emph{IEEE Trans.
  Information Forensics and Security}, vol.~12, no.~9, pp. 2227--2241, Sep
  2017.

\bibitem{NarISSP08}
A.~Narayanan and V.~Shmatikov, ``Robust de-anonymization of large sparse
  datasets (how to break anonymity of the netflix prize dataset),'' in
  \emph{Proc. IEEE Symp. Security and Privacy}, 2008.

\bibitem{CalISSP11}
J.~A. Calandrino, A.~Kilzer, A.~Narayanan, E.~W. Felten, and V.~Shmatikov,
  ``“you might also like”: Privacy risks of collaborative filtering,'' in
  \emph{Proc. IEEE Symp. Security Privacy}, may 2011, pp. 231--246.

\bibitem{HohMCT0512}
B.~Hoh, T.~Iwuchukwu, Q.~Jacobson, M.~Gruteser, A.~Bayen, J.-C. Herrera,
  R.~Herring, D.~Work, M.~Annavaram, and J.~Ban, ``Enhancing privacy and
  accuracy in probe vehicle based traffic monitoring via virtual trip lines,''
  \emph{IEEE Trans. Mobile Comput.}, vol.~11, no.~5, pp. 849--864, may 2012.

\bibitem{ZhangICMCN11}
H.~Zhang and J.~Bolot, ``Anonymization of location data does not work: A
  large-scale measurement study,'' in \emph{Proc. 17th Annu. Int. Conf. Mobile
  Comput. and Network}, 2011.

\bibitem{GisITST0615}
S.~Gisdakis, V.~Manolopoulos, S.~Tao, A.~Rusu, and P.~Papadimitratos, ``Secure
  and privacy-preserving smartphone-based traffic information systems,''
  \emph{IEEE trans. Intell. Transp. Syst.}, vol.~16, no.~3, pp. 1428--1438, Jun
  2015.

\bibitem{AndreICASSP2017}
H.~Andre and J.~L. Ny, ``A differentially private ensemble {K}alman filter for
  road traffic estimation,'' in \emph{Proc. IEEE Int. Conf. Acoustics, Speech,
  and Signal Processing}, 2017.

\bibitem{KunSPM2017}
S.~Y. Kung, ``Compressive privacy from information estimation,'' \emph{IEEE
  Signal Process. Mag.}, vol.~34, no.~1, pp. 94--112, Jan 2017.

\bibitem{SunICASSP2016}
M.~Sun and W.~P. Tay, ``Privacy-preserving nonparametric decentralized
  detection,'' in \emph{Proc. IEEE Int. Conf. Acoustics, Speech, and Signal
  Processing}, 2016.

\bibitem{HeICASSP2017}
X.~He and W.~P. Tay, ``Multilayer sensor network for information privacy,'' in
  \emph{Proc. IEEE Int. Conf. Acoustics, Speech, and Signal Processing}, 2017.

\bibitem{SunSPAWC2017}
M.~Sun and W.~P. Tay, ``Inference and data privacy in { IoT} networks,'' in
  \emph{Proc. IEEE Workshop on Signal Proc. Advances in Wireless Commun.},
  2017.

\bibitem{MengSun2018}
M.~Sun, W.~P. Tay, and X.~He, ``Toward information privacy for the internet of
  things: A nonparametric learning approach,'' \emph{IEEE Trans. Signal
  Process.}, vol.~66, no.~7, pp. 1734--1747, Dec 2018.

\bibitem{BonICTC2005}
D.~Boneh, E.~J. Goh, and K.~Nissim, ``Evaluating 2-{DNF} formulas on
  ciphertexts,'' in \emph{Proc. Int. Conf. on Theory of Cryptography,
  Cambridge, MA}, 2005, pp. 325--341.

\bibitem{IshICTC2007}
Y.~Ishai and A.~Paskin, ``Evaluating branching programs on encrypted data,'' in
  \emph{Proc. Int. Conf. on Theory of Cryptography, Berlin, Heidelberg}, 2007,
  pp. 575--594.

\bibitem{CenASTC2009}
C.~Gentry, ``Fully homomorphic encryption using ideal lattices,'' in
  \emph{Proc. ACM Symp. on Theory of Computing, Bethesda, MD}, 2009, pp.
  169--178.

\bibitem{MacGehKif:C06}
A.~Machanavajjhala, J.~Gehrke, D.~Kifer, and M.~Venkitasubramaniam,
  ``L-diversity: privacy beyond k-anonymity,'' in \emph{International
  Conference on Data Engineering}, Atlanta, GA, USA, April 2006.

\bibitem{BinShoGun:J17}
V.~Bindschaedler, R.~Shokri, and C.~A. Gunter, ``Plausible deniability for
  privacy-preserving data synthesis,'' \emph{VLDB}, vol.~10, no.~5, pp.
  481--492, Jan. 2017.

\bibitem{WanSIGKDD2003}
Y.~Wang, X.~Wu, and H.~Donghui, ``Using randomized response for differential
  privacy preserving data collection,'' in \emph{Proc. ACM SIGKDD Int. Conf. on
  Knowledge Discovery and Data Mining, Washington, D.C.}, 2003, pp. 505--510.

\bibitem{SarACCCC2014}
A.~D. Sarwate and L.~Sankar, ``A rate-disortion perspective on local
  differential privacy,'' in \emph{Proc. Allerton Conf. on Commun., Control and
  Computing, Monticello, IL}, 2014, pp. 903--908.

\bibitem{XioICASSP2016}
S.~Xiong, A.~D. Sarwate, and N.~B. Mandayam, ``Randomized requantization with
  local differential privacy,'' in \emph{Proc. IEEE Int. Conf. Acoustics,
  Speech, and Signal Processing}, 2016.

\bibitem{Liao2017m}
J.~Liao, L.~Sankar, F.~P. Calmon, and V.~Y. Tan, ``Hypothesis testing under
  maximal leakage privacy constraints,'' arXiv:1701.07099, 2017.

\bibitem{DucJorWai:C13}
J.~C. Duchi, M.~I. Jordan, and M.~J. Wainwright, ``Local privacy and
  statistical minimax rates,'' in \emph{Annual Symposium on Foundations of
  Computer Science}, Berkeley, CA, USA, Oct. 2013.

\bibitem{CalAllerton2012}
F.~du~Pin~Calmon and N.~Fawaz, ``Privacy against statistical inference,'' in
  \emph{Proc. Annual Allerton Conf. Communication, Control, and Computing},
  2012.

\bibitem{AsoISIT2016}
S.~Asoodeh, F.~Alajaji, and T.~Linder, ``Privacy-aware {MMSE} estimation,'' in
  \emph{Proc. IEEE Int. Symp. on Inform. Theory}, 2016.

\bibitem{dwork2014}
C.~Dwork and A.~Roth, ``The algorithmic foundations of differential privacy,''
  \emph{Foundations and Trends \textsuperscript{\textregistered} in Theoretical
  Computer Science}, vol.~9, no. 3-4, pp. 211--407, 2014.

\bibitem{CufYu:C16}
P.~Cuff and L.~Yu, ``Differential privacy as a mutual information constraint,''
  in \emph{ACM SIGSAC Conference on Computer and Communications Security},
  October 2016, pp. 43--54.

\bibitem{GhoKle:arXiv16}
A.~Ghosh and R.~Kleinberg, ``Inferential privacy guarantees for differentially
  private mechanisms,'' \emph{arXiv preprint arXiv:1603.01508}, 2016.

\bibitem{HeMacDin:C14}
X.~He, A.~Machanavajjhala, and B.~Ding, ``Blowfish privacy: tuning
  privacy-utility trade-offs using policies,'' in \emph{SIGMOD}, Snowbird,
  Utah, USA, June 2014, pp. 1447--1458.

\bibitem{WangTIT2016}
W.~Wang, L.~Ying, and J.~Zhang, ``On the relation between identifiability,
  differential privacy, and mutual-information privacy,'' \emph{IEEE Trans.
  Inf. Theory}, vol.~62, pp. 5018--5029, Sep 2016.

\bibitem{duchi2014}
J.~C. Duchi, M.~I. Jordan, , and M.~J. Wainwright, ``Privacy aware learning,''
  \emph{Journal of the ACM}, vol.~61, no.~6, Nov. 2014.

\bibitem{SalGCSIP2013}
S.~Salamatian, A.~Zhang, F.~du~Pin~Calmon, S.~Bhamidipati, N.~Fawaz, B.~Kveton,
  P.~Oliveira, and N.~Taft, ``How to hide the elephant-or the donkey-in the
  room: Practical privacy against statistical inference for large data,'' in
  \emph{Proc. IEEE Global Conf. on Signal and Information Processing, Austin,
  TX}, no. 269-272, 2013.

\bibitem{YamTIT1183}
H.~Yamamoto, ``A source coding problem for sources with additional outputs to
  keep secret from the receiver or wiretappers,'' \emph{IEEE Trans. Inf.
  Theory}, vol.~29, no.~6, pp. 918--923, Nov 1983.

\bibitem{GloACGKB2003}
A.~Globerson, G.~Chechik, and N.~Tishby, ``Extracting continuous relevant
  features,'' in \emph{Proc. Annual Conf. of the Gesellschaft für
  Klassifikation e.V., Brandenburg}, 2003, pp. 224--238.

\bibitem{CheANIPS2002}
G.~Chechik and N.~Tishby, ``Extracting relevant structures with side
  information,'' in \emph{Advances in Neural Information Processing Systems},
  2002, pp. 857--864.

\bibitem{CheNIPS2003}
G.~Chechik, A.~Globerson, N.~Tishby, and Y.~Weiss, ``Information bottleneck for
  gaussian variables,'' in \emph{Advances in Neural Information Processing
  Systems}, 2003.

\bibitem{EmadITW2013}
A.~Emad and O.~Milenkovic, ``Compression of noisy signals with information
  bottlenecks,'' in \emph{Proc. IEEE Information Theory Workshop}, 2013.

\bibitem{MakITW2014}
A.~Makhdoumi, S.~Salamatian, and N.~Fawaz, ``From the information bottleneck to
  the privacy funnel,'' in \emph{Proc. IEEE Information Theory Workshop}, 2014.

\bibitem{Diamantaras2016}
K.~Diamantaras and S.~Kung, ``Data privacy protection by kernel subspace
  projection and generalized eigenvalue decomposition,'' in \emph{IEEE Int.
  Workshop Machine Learning for Signal Processing}, Salerno, 2016, pp. 1--6.

\bibitem{KunJFI2017}
S.~Y. Kung, ``A compressive privacy approach to generalized information
  bottleneck and privacy funnel problems,'' \emph{Journal of the Franklin
  Institute}, Jul 2017.

\bibitem{Al2017}
M.~Al, S.~Wan, and S.~Kung, ``Ratio utility and cost analysis for privacy
  preserving subspace projection,'' \emph{arXiv preprint arXiv:1702.07976},
  2017.

\bibitem{ChaChaMit:C16}
T.~Chanyaswad, J.~M. Chang, P.~Mittal, and S.~Y. Kung, ``Discriminant-component
  eigenfaces for privacy-preserving face recognition,'' in \emph{International
  Workshop on Machine Learning for Signal Processing}, Vietri sul Mare, Italy,
  Sept. 2016.

\bibitem{KunChaCha:J17}
S.~Y. Kung, T.~Chanyaswad, J.~M. Chang, and P.~Wu, ``Collaborative pca/dca
  learning methods for compressive privacy,'' \emph{ACM T EMBED COMPUT S.},
  vol.~16, no.~3, 2017.

\bibitem{ChaChaKun:C17}
T.~Chanyaswad, J.~M. Chang, and S.~Y. Kung, ``A compressive multi-kernel method
  for privacy-preserving machine learning,'' in \emph{International Joint
  Conference on Neural Networks}, Anchorage, AK, USA, May 2017.

\bibitem{NyTAC2014}
J.~L. Ny and G.~J. Pappas, ``Differentially private filtering,'' \emph{IEEE
  Trans. Automat. Contr.}, vol.~59, pp. 341--354, Feb 2014.

\bibitem{DegueGlobalSIP17}
K.~H. Degue and J.~L. Ny, ``On differentially private {K}alman filtering,'' in
  \emph{Proc. IEEE Global Conf. on Signal and Information Processing}, 2017.

\bibitem{DworkEUROCRYPT2006}
C.~Dwork, K.~Kenthapadi, F.~McSherry, I.~Mironov, and M.~Naor, ``Our data,
  ourselves: Privacy via distributed noise generation,'' \emph{EUROCRYPT}, pp.
  486--503, 2006.

\bibitem{SteSPL2014}
M.~Stein, M.~Casta{\~n}eda, and A.~Mezghani, ``Information-preserving
  transformations for signal parameter estimation,'' \emph{IEEE Signal
  Processing Letters}, vol.~21, no.~7, pp. 866--870, Jul 2014.

\bibitem{SongICASSP2018}
Y.~Song, C.~X. Wang, and W.~P. Tay, ``Privacy-aware {Kalman} filtering,'' in
  \emph{Proc. IEEE Int. Conf. Acoustics, Speech, and Signal Processing},
  Calgary, Canada, 2018.

\bibitem{ReidTAC79}
D.~Reid, ``An algorithm for tracking multiple targets,'' \emph{IEEE Trans.
  Automat. Contr.}, vol.~24, no.~6, pp. 843--854, Dec. 1979.

\bibitem{JulierJPROC04}
S.~J. Julier and J.~K. Uhlmann, ``Unscented filtering and nonlinear
  estimation,'' \emph{Proc. IEEE}, vol.~92, no.~3, pp. 401--422, Mar. 2004.

\bibitem{HuaDis:J07}
S.~Huang and G.~Dissanayake, ``Convergence analysis for extended {Kalman}
  filter based {SLAM},'' \emph{IEEE Trans. Robot.}, vol.~23, no.~5, pp.
  1036--1049, October 2007.

\bibitem{SinSchFra:J04}
B.~Sinopoli, L.~Schenato, M.~Franceschetti, K.~Poolla, M.~I. Jordan, and S.~S.
  Sastry, ``Kalman filtering with intermittent observations,'' \emph{IEEE
  Trans. Automat. Contr.}, vol.~49, no.~9, pp. 1453--1464, Sep. 2004.

\bibitem{SchSinFra:J07}
L.~Schenato, B.~Sinopoli, M.~Franceschetti, K.~Poolla, and S.~S. Sastry,
  ``Foundations of control and estimation over lossy networks,''
  \emph{Proceedings of the IEEE}, vol.~95, no.~1, pp. 163--187, Jan. 2007.

\bibitem{DonHeeWou:J11}
M.~C.~F. Donkers, W.~P. M.~H. Heemels, N.~van~de Wouw, and L.~Hetel,
  ``Stability analysis of networked control systems using a switched linear
  systems approach,'' \emph{IEEE Trans. Automat. Contr.}, vol.~56, no.~9, pp.
  2101--2115, Sep. 2011.

\bibitem{XiaShaChe:J09}
Y.~Xia, J.~Shang, J.~Chen, , and G.-P. Liu, ``Networked data fusion with packet
  losses and variable delays,'' \emph{IEEE Trans. Syst. Man Cybern. - Part B:
  Cybern.}, vol.~39, no.~5, pp. 1107--1120, October 2009.

\bibitem{LiuGol:C04}
X.~Liu and A.~Goldsmith, ``Kalman filtering with partial observation losses,''
  in \emph{IEEE Conference on Decision and Control}, December 2004.

\bibitem{HuaDey:J07}
M.~Huang and S.~Dey, ``Stability of kalman filtering with markovian packet
  losses,'' \emph{Automatica}, vol.~43, pp. 598--607, 2007.

\bibitem{LiaChePan:J10}
Y.~Liang, T.~Chen, and Q.~Pan, ``Optimal linear state estimator with multiple
  packet dropouts,'' \emph{IEEE Trans. Automat. Contr.}, vol.~55, no.~6, pp.
  1428--1433, June 2010.

\bibitem{MoSin:J12}
Y.~Mo and B.~Sinopoli, ``Kalman filtering with intermittent observations: Tail
  distribution and critical value,'' \emph{IEEE Trans. Automat. Contr.},
  vol.~57, no.~3, pp. 677--689, March 2012.

\bibitem{NouLeoDey:J14}
M.~Nourian, A.~S. Leong, and S.~Dey, ``Optimal energy allocation for kalman
  filtering over packet dropping links with imperfect acknowledgments and
  energy harvesting constraints,'' \emph{IEEE Trans. Automat. Contr.}, vol.~59,
  no.~8, pp. 2128--2143, August 2014.

\bibitem{SahCheSha:J07}
M.~Sahebsara, T.~Chen, and S.~L. Shah, ``Optimal h2 filtering in networked
  control systems with multiple packet dropout,'' \emph{IEEE Trans. Automat.
  Contr.}, vol.~52, no.~8, pp. 1508--1513, August 2007.

\bibitem{ShiEpsMur:J10}
L.~Shi, M.~Epstein, and R.~M. Murray, ``Kalman filtering over a packet-dropping
  network: A probabilistic perspective,'' \emph{IEEE Trans. Automat. Contr.},
  vol.~55, no.~3, pp. 594--604, March 2010.

\bibitem{SilSol:J13}
E.~I. Silva and M.~A. Solis, ``An alternative look at the constant-gain kalman
  filter for state estimation over erasure channels,'' \emph{IEEE Trans.
  Automat. Contr.}, vol.~58, no.~2, pp. 3259--3265, December 2013.

\bibitem{KarMou:J11}
S.~Kar and J.~M.~F. Moura, ``Gossip and distributed kalman filtering: Weak
  consensus under weak detectability,'' \emph{IEEE Trans. Signal Process.},
  vol.~59, no.~4, pp. 1766--1784, April 2011.

\bibitem{QueAhlJoh:J13}
D.~E. Quevedo, A.~Ahl{\'e}n, and K.~H. Johansson, ``State estimation over
  sensor networks with correlated wireless fading channels,'' \emph{IEEE Trans.
  Automat. Contr.}, vol.~58, no.~3, pp. 581--593, March 2013.

\bibitem{RohMarFu:J14}
E.~R. Rohr, D.~Marelli, and M.~Fu, ``Kalman filtering with intermittent
  observations: On the boundedness of the expected error covariance,''
  \emph{IEEE Trans. Automat. Contr.}, vol.~59, no.~10, pp. 2724--2738, October
  2014.

\bibitem{Kalman1960}
R.~E. Kalman, ``A new approach to linear filtering and prediction problems,''
  \emph{J. of Basic Eng.}, vol.~82, no.~1, pp. 35--45, Mar 1960.

\bibitem{dwm1001}
https://www.decawave.com/.

\bibitem{har}
D.~Anguita, A.~Ghio, L.~Oneto, X.~Parra, and J.~L. Reyes-Ortiz, ``A public
  domain dataset for human activity recognition using smartphones,'' in
  \emph{21th European Symposium on Artificial Neural Networks, Computational
  Intelligence and Machine Learning}, Bruges, Belgium, April 2013.

\bibitem{weka}
E.~Frank, M.~A. Hall, and I.~H. Witten, \emph{The WEKA Workbench. Online
  Appendix for "Data Mining: Practical Machine Learning Tools and
  Techniques"}.\hskip 1em plus 0.5em minus 0.4em\relax Morgan Kaufmann, Fourth
  Edition, 2016.

\end{thebibliography}

\end{document}